\newtheorem{thm}{Theorem}
\newtheorem{cor}{Corollary}
\newtheorem{rem}{Remark}
\newtheorem{lem}{Lemma}
\def\Ds{\displaystyle}
\def\tp{\mathrm{T}}
\def\Z{\mathbb{Z}}
\def\R{\mathbb{R}}
\begin{document}

\title{System Equivalence Transformation: Robust Convergence of Iterative Learning Control\\with Nonrepetitive Uncertainties}

\author{Deyuan Meng and Jingyao Zhang 
\thanks{Deyuan Meng and Jingyao Zhang are with the Seventh Research Division, Beihang University (BUAA), Beijing 100191, P. R. China, and also with the School of Automation Science and Electrical Engineering, Beihang University (BUAA), Beijing 100191, P. R. China (e-mail: dymeng@buaa.edu.cn, zhangjingyao@buaa.edu.cn).}
}

\date{}
\maketitle

\begin{abstract}
For iterative learning control (ILC), one of the basic problems left to address is how to solve the contradiction between convergence conditions for the output tracking error and for the input signal (or error). This problem is considered in the current paper, where the robust convergence analysis is achieved for ILC systems in the presence of nonrepetitive uncertainties. A system equivalence transformation (SET) is proposed for ILC such that given any desired reference trajectories, the output tracking problems for general nonsquare multi-input, multi-output (MIMO) systems can be equivalently transformed into those for the specific class of square MIMO systems with the same input and output channels. As a benefit of SET, a unified condition is only needed to guarantee both the uniform boundedness of all system signals and the robust convergence of the output tracking error, which avoids causing the condition contradiction problem in implementing the double-dynamics analysis approach to ILC. Simulation examples are included to demonstrate the validity of our established robust ILC results.
\end{abstract}

\begin{IEEEkeywords}
Iterative learning control, nonrepetitive uncertainty, robustness, system equivalence transformation.
\end{IEEEkeywords}

\section{Introduction}\label{sec1}

Iterative learning control (ILC) has been proposed as an effective intelligent control approach that has a salient focus on the systems running repetitively over a fixed (time) interval, see, e.g., \cite{ksgha:19} for ventricular assist devices, \cite{hvxh:14} for hard disk drives, \cite{hmhs:19} for flexible structures, and \cite{dcrfd:19} for stimulated lower-limb muscle groups. Until now, ILC has an extensive literature covering linear and nonlinear plants; extensions to plants to consider robustness with respect to external disturbances, initial conditions, and model uncertainties; ILC algorithm design; and applications (for more discussions, see, e.g., the surveys of \cite{bta:06}-\cite{x:11} that provide a good introduction of ILC). Nevertheless, despite the progress that has been made in understanding ILC, more research is required to establish a complete theory. For example, it is generally needed to decide how to implement convergence analysis of ILC. Which should we better resort to, the input (error) or (output) tracking error, and why do the different choices cause condition contradiction in the same ILC convergence problem? In the presence of any desired reference trajectory, how can the tracking ability of ILC be ensured, and further which and how many input channels should we specify for ILC updating to accomplish the tracking tasks? To our best knowledge, there have not been provided any clear answers to these questions in the ILC literature.

There are basically two types of approaches to convergence analysis of ILC in the literature \cite{t:07,mjdy:11b}. The first one is most used in the early stage of ILC, which focuses on exploiting the ILC updating law and resorts to the input convergence to indirectly induce the (output) tracking objective (see, e.g., \cite{lb:96}-\cite{s:05}). Thus, it creates an indirect approach to the ILC convergence analysis, with which convergence properties of all system signals can be accomplished. However, the indirect approach is applicable for only a portion of reference trajectories that are called realizable trajectories and generally needs to make an assumption on the system invertibility. It is usually hard to check the realizability of any desired reference trajectory, which restricts the tracking ability of ILC. By contrast, the second one is a direct approach that contributes directly to gaining the tracking objective based on developing the convergence of ILC tracking error (see, e.g., \cite{zxhh:15}-\cite{j:19b}). Although the direct approach avoids imposing the realizability of desired reference trajectories, it neglects the updating process of input, and as a consequence it is unclear what convergence properties of the input signal are and which and how many input channels are required for the specified tracking task. 

Another fact worth noticing is that the indirect and direct approaches create different convergence conditions in ILC. Take, for example, ILC for multi-input, multi-output (MIMO) linear systems in \cite{acm:07,mjdy:11b}. The indirect and direct approaches for ILC need fundamentally full-column rank and full-row rank of the same Markov parameter (or pulse response coefficient) matrix, respectively, which however causes the condition contradiction in ILC convergence. In circumstances concerning nonrepetitive (or iteration-dependent) uncertainties especially arising in the plant models \cite{mm:14}-\cite{mm:172}, a (indirect and direct) double-dynamics analysis is performed and two different convergence conditions are produced for ILC of MIMO systems. Though the condition contradiction is avoided by harnessing the nonrepetitiveness in ILC, it seems that the ILC convergence results of nonrepetitive systems no longer work for the special case when nonrepetitive uncertainties disappear. This puzzle is because of the condition contradiction that is caused by the indirect and direct analysis approaches to ILC.

This paper contributes to constructing a system equivalence transformation (SET) approach for robust ILC in the presence of nonrepetitive uncertainties, including initial shifts, external disturbances and model uncertainties. It is shown from the SET results that for any desired reference trajectories, the tracking problems for nonsquare MIMO systems can be equivalently transformed into those for the specific class of square MIMO systems whose input and output channels are equal. With this property, the double-dynamics analysis of ILC can be implemented and the condition contradiction in the ILC convergence can be simultaneously avoided. Furthermore, for the tracking of any $p$-channel desired reference trajectory, $p$ input channels are essentially needed to be updated from iteration to iteration and the other input channels (if exist) are unchanged. Another advantage of SET is that a unified condition is sufficient to not only guarantee the uniform boundedness of all system signals but also accomplish the robust convergence of the ILC tracking error. Simulation tests are also implemented to illustrate the effectiveness of our proposed robust ILC method and results.

The remaining of our paper is organized as follows. We give the problem formulation for robust ILC of uncertain systems in Section \ref{sec2}. In Section \ref{sec3}, the SET results are first proposed and then the robust convergence results of MIMO uncertain ILC systems are established, for which a unified condition is only utilized regardless of the nonrepetitive uncertainties. Extensions of the SET-based convergence analysis results are made for ILC systems subject to nonzero relative degrees in Section \ref{sec6}. Simulations and conclusions are provided in Sections \ref{sec4} and \ref{sec5}, respectively.

{\it Notations:} $\left\|A\right\|$ (or $\left\|A\right\|_{\infty}$) is a norm (or the maximum row sum norm) of any matrix $A\in\mathbb{R}^{m\times n}$; $\rho(A)$ is the spectral radius of a square matrix $A\in\mathbb{R}^{n\times n}$; and $\Delta:z_{l}(k)\to \Delta z_{l}(k)\triangleq z_{l+1}(k)-z_{l}(k)$ is a forward difference operator of any vector $z_{l}(k)\in\mathbb{R}^{n}$ with respect to the changing of $l$ for a fixed $k$.

\section{Problem Formulation}\label{sec2}

Let $k\in\mathbb{Z}_{N}\triangleq\left\{0,1,\cdots,N\right\}$ and $l\in\mathbb{Z}_{+}\triangleq\left\{0,1,2,\cdots\right\}$ be the (discrete) time variable and the iteration variable, respectively. Consider a discrete-time, MIMO uncertain system over $k\in\mathbb{Z}_{N}$ and $l\in\mathbb{Z}_{+}$ as
\begin{equation}\label{eq01}
\left\{\aligned
x_{l}(k+1)&=A_{l}(k)x_{l}(k)+B_{l}(k)u_{l}(k)+w_{l}(k)\\
y_{l}(k)&=C_{l}(k)x_{l}(k)+D_{l}(k)u_{l}(k)+v_{l}(k)
\endaligned\right.
\end{equation}

\noindent where $x_{l}(k)\in\mathbb{R}^{n}$, $u_{l}(k)\in\mathbb{R}^{m}$ and $y_{l}(k)\in\mathbb{R}^{p}$ denote the state, input and output, respectively; $A_{l}(k)\in\mathbb{R}^{n\times n}$, $B_{l}(k)\in\mathbb{R}^{n\times m}$, $C_{l}(k)\in\mathbb{R}^{p\times n}$ and $D_{l}(k)\in\mathbb{R}^{p\times m}$ are uncertain system matrices; and $w_{l}(k)\in\mathbb{R}^{n}$ and $v_{l}(k)\in\mathbb{R}^{p}$ are the load and measurement disturbances, respectively. Obviously, the system (\ref{eq01}) is subject to nonrepetitive (or iteration-dependent) uncertainties from not only external disturbances but also plant model matrices.

{\it Tracking objective:} Consider any reference trajectory $r_{l}(k)\in\mathbb{R}^{p}$, and then the objective of ILC is to drive the system (\ref{eq01}) to track the reference trajectory as accurately as possible over $k\in\mathbb{Z}_{N}$ with the increasing of the iteration number $l$, regardless of the nonrepetitive uncertainties. Mathematically, let the tracking error be $e_{l}(k)=r_{l}(k)-y_{l}(k)$, which is ensured to be uniformly bounded such that
\begin{equation}\label{eq02}
\limsup_{l\to\infty}\max_{1\leq k\leq N}\|e_{l}(k)\|\leq\varepsilon
\end{equation}

\noindent for some (relatively small) finite bound $\varepsilon\geq0$. Note that in (\ref{eq02}), the steady error bound $\varepsilon$ depends on all classes of nonrepetitive uncertainties and vanishes when the system (\ref{eq01}) and its tracking task collapse into a repetitive system to track a fixed reference trajectory without nonrepetitive uncertainties.

To reach the robust tracking objective of ILC in the presence of nonrepetitive uncertainties, we apply an ILC algorithm with the following updating law:
\begin{equation}\label{eq03}
u_{l+1}(k)
=u_{l}(k)+\Xi(k)e_{l}(k)
+\Gamma(k)e_{l}(k+1),\quad \forall l\in\mathbb{Z}_{+},k\in\mathbb{Z}_{N}
\end{equation}

\noindent where $\Xi(k)\in\mathbb{R}^{m\times p}$ and $\Gamma(k)\in\mathbb{R}^{m\times p}$ are gain matrices to be selected. Even though the system (\ref{eq01}) is nonrepetitive, the ILC algorithm (\ref{eq03}) employs two repetitive gain matrices to deal with nonrepetitive uncertainties. Furthermore, the selections of gain matrices may adapt to the different relative degree cases of the system (\ref{eq01}).

To carry out the robust analysis of ILC against nonrepetitive uncertainties, we are interested in the nonrepetitive quantities $A_{l}(k)$, $B_{l}(k)$, $C_{l}(k)$, $D_{l}(k)$, $w_{l}(k)$, $v_{l}(k)$, $r_{l}(k)$ and $x_{l}(0)$ required in the ILC tracking of the system (\ref{eq01}) to take the form of
\begin{equation}\label{eq04}
\aligned
A_{l}(k)&=A(k)+\delta_{A}(l,k),
&B_{l}(k)&=B(k)+\delta_{B}(l,k)\\
C_{l}(k)&=C(k)+\delta_{C}(l,k),
&D_{l}(k)&=D(k)+\delta_{D}(l,k)\\
w_{l}(k)&=w(k)+\delta_{w}(l,k),
&v_{l}(k)&=v(k)+\delta_{v}(l,k)\\
r_{l}(k)&=r(k)+\delta_{r}(l,k),
&x_{l}(0)&=x_{0}+\delta_{x_{0}}(l)\\
\endaligned
\end{equation}

\noindent where $A(k)$, $B(k)$, $C(k)$, $D(k)$, $w(k)$, $v(k)$, $r(k)$ and $x_{0}$ represent repetitive (or iteration-independent) quantities of  $A_{l}(k)$, $B_{l}(k)$, $C_{l}(k)$, $D_{l}(k)$, $w_{l}(k)$, $v_{l}(k)$, $r_{l}(k)$ and $x_{l}(0)$, respectively, and $\delta_{A}(l,k)$,  $\delta_{B}(l,k)$, $\delta_{C}(l,k)$, $\delta_{D}(l,k)$, $\delta_{w}(l,k)$, $\delta_{v}(l,k)$, $\delta_{r}(l,k)$ and $\delta_{x_{0}}(l)$ are nonrepetitive uncertainties of them, respectively. For these nonrepetitive uncertainties, we also make a fundamental boundedness assumption.

\begin{enumerate}
\item[(A1)]
{\it Bounded Uncertainties}: For any $k\in\mathbb{Z}_{N}$ and any $l\in\mathbb{Z}_{+}$, the nonrepetitive uncertainties $\delta_{A}(l,k)$, $\delta_{B}(l,k)$, $\delta_{C}(l,k)$, $\delta_{D}(l,k)$, $\delta_{w}(l,k)$, $\delta_{v}(l,k)$, $\delta_{r}(l,k)$ and $\delta_{x_{0}}(l)$ are bounded such that
\begin{equation*}\label{}
\aligned
\left\|\delta_{A}(l,k)\right\|&\leq\overline{\beta}_{A},
&\left\|\delta_{B}(l,k)\right\|&\leq\overline{\beta}_{B}\\
\left\|\delta_{C}(l,k)\right\|&\leq\overline{\beta}_{C},
&\left\|\delta_{D}(l,k)\right\|&\leq\overline{\beta}_{D}\\
\left\|\delta_{w}(l,k)\right\|&\leq\overline{\beta}_{w},
&\left\|\delta_{v}(l,k)\right\|&\leq\overline{\beta}_{v}\\
\left\|\delta_{r}(l,k)\right\|&\leq\overline{\beta}_{r},
&\left\|\delta_{x_{0}}(l)\right\|&\leq\overline{\beta}_{x_{0}}
\endaligned
\end{equation*}

\noindent for some finite bounds $\overline{\beta}_{A}\geq0$, $\overline{\beta}_{B}\geq0$, $\overline{\beta}_{C}\geq0$, $\overline{\beta}_{D}\geq0$, $\overline{\beta}_{w}\geq0$, $\overline{\beta}_{v}\geq0$, $\overline{\beta}_{r}\geq0$ and $\overline{\beta}_{x_{0}}\geq0$.
\end{enumerate}

\begin{rem}\label{rem01}
Because the Assumption (A1) is concerned with only the nonrepetitive uncertainties $\delta_{A}(l,k)$, $\delta_{B}(l,k)$, $\delta_{C}(l,k)$, $\delta_{D}(l,k)$, $\delta_{w}(l,k)$, $\delta_{v}(l,k)$, $\delta_{r}(l,k)$ and $\delta_{x_{0}}(l)$, it is generally reasonable to consider $\overline{\beta}_{A}$, $\overline{\beta}_{B}$, $\overline{\beta}_{C}$, $\overline{\beta}_{D}$, $\overline{\beta}_{w}$, $\overline{\beta}_{v}$, $\overline{\beta}_{r}$ and $\overline{\beta}_{x_{0}}$ to be relatively smaller in comparison with $\max_{0\leq k\leq N}\left\|A(k)\right\|$, $\max_{0\leq k\leq N}\left\|B(k)\right\|$, $\max_{0\leq k\leq N}\left\|C(k)\right\|$, $\max_{0\leq k\leq N}\left\|D(k)\right\|$, $\max_{0\leq k\leq N}\left\|w(k)\right\|$, $\max_{0\leq k\leq N}\left\|v(k)\right\|$, $\max_{0\leq k\leq N}\left\|r(k)\right\|$ and $\left\|x_{0}\right\|$, respectively. Let us also denote
\begin{equation*}\label{}
\aligned
\beta_{A}&=\overline{\beta}_{A}+\max_{0\leq k\leq N}\left\|A(k)\right\|,
&\beta_{B}&=\overline{\beta}_{B}+\max_{0\leq k\leq N}\left\|B(k)\right\|\\
\beta_{C}&=\overline{\beta}_{C}+\max_{0\leq k\leq N}\left\|C(k)\right\|,
&\beta_{D}&=\overline{\beta}_{D}+\max_{0\leq k\leq N}\left\|D(k)\right\|\\
\beta_{w}&=\overline{\beta}_{w}+\max_{0\leq k\leq N}\left\|w(k)\right\|,
&\beta_{v}&=\overline{\beta}_{v}+\max_{0\leq k\leq N}\left\|v(k)\right\|\\
\beta_{r}&=\overline{\beta}_{r}+\max_{0\leq k\leq N}\left\|r(k)\right\|,
&\beta_{x_{0}}&=\overline{\beta}_{x_{0}}+\left\|x_{0}\right\|
\endaligned
\end{equation*}

\noindent which can actually be the bounds for $A_{l}(k)$, $B_{l}(k)$, $C_{l}(k)$, $D_{l}(k)$, $w_{l}(k)$, $v_{l}(k)$, $r_{l}(k)$ and $x_{l}(0)$, respectively, based on (\ref{eq04}).
\end{rem}

\section{SET-Based Convergence Analysis of ILC}\label{sec3}

\subsection{A SET Approach}

To address the robust output tracking problems of the system (\ref{eq01}) with any desired reference outputs subject to nonrepetitive uncertainties, the application of the updating law (\ref{eq03}) basically requires certain assumptions on the system structure associated with the input-output coupling relation that is related closely to the selections of gain matrices (see also \cite{acm:07,mjdy:11b}). We attempt to make use of the gain matrix $\Xi(k)$ in (\ref{eq03}) (i.e., letting $\Gamma(k)\equiv0$) to achieve the tracking objective (\ref{eq02}), for which the following relative degree condition is needed.

\begin{enumerate}
\item[(A2)]
{\it Relative Degree Conditions}: For any $k\in\mathbb{Z}_{N}$, the nominal matrix $D(k)$ has full-row rank.
\end{enumerate}

It is worth highlighting that (A2) resorts to only the nominal matrix $D(k)$, rather than the uncertain system matrix $D_{l}(k)$. In general, if the relative degree condition (A2) can not be satisfied, then the updating law (\ref{eq03}) enables the system (\ref{eq01}) to track not any but only a portion of references that are considered to be {\it realizable references} \cite{mjdy:11b}. However, in contrast, the output tracking of any desired reference needs the relative degree condition (A2) as a basic requirement. From the relative degree condition (A2), we denote $D(k)=\left[D_{1}(k)~D_{2}(k)\right]$ with $D_{1}(k)\in\mathbb{R}^{p\times p}$ and $D_{2}(k)\in\mathbb{R}^{p\times(m-p)}$ such that $D_{1}(k)$ is a nonsingular matrix without loss of generality (otherwise, this can be easily realized based on the elementary transformation). We accordingly denote $\Xi(k)=\left[\Xi_{1}^{\tp}(k)~\Xi_{2}^{\tp}(k)\right]^{\tp}$, where $\Xi_{1}(k)\in\mathbb{R}^{p\times p}$ and $\Xi_{2}(k)\in\mathbb{R}^{(m-p)\times p}$.

Based on the relative degree condition (A2), next we propose a lemma on how to construct a nonsingular transformation matrix for the ILC system (\ref{eq01}) and (\ref{eq03}).

\begin{lem}\label{lem01}
If the following spectral radius condition holds:
\begin{equation}\label{eq05}
\aligned
\rho\left(I-D(k)\Xi(k)\right) <1, \quad \forall  k \in \Z_{N}
\endaligned
\end{equation}

\noindent then there exist four matrices, given by
\begin{equation*}
\aligned
Q_{11}(k)&=D_1(k)\\
Q_{12}(k)&=D_2(k)\\
Q_{21}(k)&=-\Xi_2(k)[D(k)\Xi(k)]^{-1}D_1(k)\\
Q_{22}(k)&=I-\Xi_2(k)[D(k)\Xi(k)]^{-1}D_2(k)
\endaligned
\end{equation*}

\noindent such that a structured block matrix $Q(k)$ in the form of
\begin{equation*}
\aligned
Q(k)=\begin{bmatrix}
       Q_{11}(k) & Q_{12}(k) \\
       Q_{21}(k) & Q_{22}(k)
     \end{bmatrix}
\endaligned
\end{equation*}

\noindent is nonsingular, of which the inverse matrix is structured in the form of
\begin{equation*}
\aligned
Q^{-1}(k)=\begin{bmatrix}
       \widehat{Q}_{11}(k) & \widehat{Q}_{12}(k) \\
       \widehat{Q}_{21}(k) & \widehat{Q}_{22}(k)
     \end{bmatrix}
\endaligned
\end{equation*}

\noindent where
\begin{equation*}
\aligned
\widehat{Q}_{11}(k)&=\Xi_1(k)[D(k)\Xi(k)]^{-1}\\
\widehat{Q}_{12}(k)&=-[D_1(k)]^{-1}D_2(k)\\
\widehat{Q}_{21}(k)&=\Xi_2(k)[D(k)\Xi(k)]^{-1}\\
\widehat{Q}_{22}(k)&=I.
\endaligned
\end{equation*}

\noindent Further, there exists a learning gain matrix $\Xi(k)$ satisfying (\ref{eq05}) if and only if the relative degree condition (A2) holds.
\end{lem}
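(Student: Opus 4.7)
The plan is to prove Lemma 1 in three phases: (i) extract invertibility of $D(k)\Xi(k)$ from the spectral radius condition (\ref{eq05}); (ii) verify by direct block multiplication that the stated inverse formula is correct, which simultaneously establishes nonsingularity of $Q(k)$; and (iii) settle the final ``if and only if'' by a short rank-counting argument.

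In the preliminary phase, I would observe that $\rho(I - D(k)\Xi(k)) < 1$ implies $1$ is not an eigenvalue of $I - D(k)\Xi(k)$, hence $0$ is not an eigenvalue of $D(k)\Xi(k)$, so $[D(k)\Xi(k)]^{-1}$ exists. In parallel, (A2) guarantees that $D(k)$ has full row rank, and by a harmless column permutation (applied uniformly also to $\Xi(k)$) one may assume $D_{1}(k) \in \mathbb{R}^{p\times p}$ is nonsingular; this legitimates $[D_{1}(k)]^{-1}$ in $\widehat{Q}_{12}(k)$. All four blocks of $Q(k)$ and $Q^{-1}(k)$ then make sense.

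The core step is a direct block-matrix multiplication. Introducing $E(k)\triangleq D(k)\Xi(k) = D_{1}(k)\Xi_{1}(k) + D_{2}(k)\Xi_{2}(k)$ as the central simplifying identity, I would compute each block of $Q(k)Q^{-1}(k)$ and use this identity to collapse the cross terms: the $(1,1)$ block becomes $[D_{1}(k)\Xi_{1}(k)+D_{2}(k)\Xi_{2}(k)]E(k)^{-1}=I$; the $(1,2)$ block becomes $-D_{1}(k)[D_{1}(k)]^{-1}D_{2}(k)+D_{2}(k)=0$; the $(2,1)$ block becomes $-\Xi_{2}(k)E(k)^{-1}E(k)E(k)^{-1}+\Xi_{2}(k)E(k)^{-1}=0$; and the $(2,2)$ block telescopes to $\Xi_{2}(k)E(k)^{-1}D_{2}(k)+I-\Xi_{2}(k)E(k)^{-1}D_{2}(k)=I$. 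This simultaneously certifies nonsingularity of $Q(k)$ and exhibits $Q^{-1}(k)$ in the stated form.

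For the concluding equivalence, if (A2) holds then the right-pseudoinverse choice $\Xi(k) = D(k)^{\tp}[D(k)D(k)^{\tp}]^{-1}$ gives $D(k)\Xi(k) = I_{p}$ and hence $\rho(I - D(k)\Xi(k)) = 0 < 1$. Conversely, if some $\Xi(k)$ satisfies (\ref{eq05}), the preliminary phase shows $D(k)\Xi(k) \in \mathbb{R}^{p\times p}$ is nonsingular, so $p = \rank(D(k)\Xi(k)) \leq \rank(D(k)) \leq p$, forcing $\rank(D(k)) = p$, i.e., (A2). The main obstacle is essentially just guessing the block structure of $Q^{-1}(k)$; once $E(k) = D_{1}(k)\Xi_{1}(k)+D_{2}(k)\Xi_{2}(k)$ is installed as the simplifier, everything reduces to routine bookkeeping.
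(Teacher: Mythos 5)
Your proposal is correct and follows the same route the paper intends: use $\rho(I-D(k)\Xi(k))<1$ to get invertibility of $D(k)\Xi(k)$, verify $Q(k)Q^{-1}(k)=I$ by direct block multiplication via the identity $D(k)\Xi(k)=D_1(k)\Xi_1(k)+D_2(k)\Xi_2(k)$, and settle the equivalence with (A2) by the pseudoinverse construction and a rank count. The paper's proof merely asserts these steps are "easily validated"; your write-up supplies exactly the omitted details, and the block computations check out.
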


\begin{proof}
It is straightforward from the matrix theory \cite{hj:85} that there exists $\Xi(k)$ fulfilling (\ref{eq05}) if and only if (A2) holds. Then with (\ref{eq05}), $D(k)\Xi(k)$ is guaranteed to be nonsingular, and as a consequence, the construction of $Q(k)$, together with its inverse $Q^{-1}(k)$, can be easily validated, for which the details are not detailed here.
\end{proof}

Based on Lemma \ref{lem01}, we proceed further to present a theorem to offer a SET result of the ILC system (\ref{eq01}) and (\ref{eq03}). 

\begin{thm}\label{thm01}
For the system (\ref{eq01}) under the updating law (\ref{eq03}) with $\Gamma(k)\equiv0$, $\forall k\in\mathbb{Z}_{N}$, if the spectral radius condition (\ref{eq05}) holds, then there exists a nonsingular linear transformation 
\begin{equation}\label{eq09}
\aligned
u^\ast_l(k)=Q(k)u_l(k)
\triangleq
\begin{bmatrix}
u^\ast_{1,l}(k) \\
u^\ast_{2,l}(k)
\end{bmatrix}
\endaligned
\end{equation}

\noindent where $u^\ast_{1,l}(k) \in \R^p$ and $u^\ast_{2,l}(k) \in \R^{m-p}$, such that
\begin{enumerate}
\item $u^\ast_{2,l}(k)$ is iteration-independent for each time step, i.e.,
\begin{equation}\label{eq10}
\aligned
u^\ast_{2,l}(k)=u^\ast_{2,0}(k),\quad\forall l \in \Z_+, k \in \Z_N
\endaligned
\end{equation}

\item $u^\ast_{l,1}(k)$ is iteratively updated in the form of
\begin{equation}\label{eq11}
\aligned
u^\ast_{1,l+1}(k)=u^\ast_{1,l}(k)+\Xi^\ast(k) e_l(k),\quad\forall l \in \Z_+, k \in \Z_N
\endaligned
\end{equation}

\noindent where the learning gain matrix $\Xi^\ast(k)$ satisfies
\[
\aligned
\Xi^\ast(k)=D(k)\Xi(k)
\endaligned
\]

\item the system (\ref{eq01}) is equivalently transformed into a system driven only by $u^\ast_{l,1}(k)$ over $l \in \Z_+$ and  $k \in \Z_N$, i.e,
\begin{equation}\label{eq12}
\left \{\aligned x_l(k+1)&=A_l(k)x_l(k)+B^\ast_l(k)u^\ast_{1,l}(k)+w^\ast_l(k)\\
y_l(k)&=C_l(k)x_l(k)+D^\ast_l(k)u^\ast_{1,l}(k)+v^\ast_l(k)
\endaligned \right.
\end{equation}

\noindent where
\begin{equation*}
\aligned
B^\ast_l(k)&=B_l(k)\begin{bmatrix}
                      \widehat{Q}_{11}(k) \\
                      \widehat{Q}_{21}(k)
                    \end{bmatrix},
                    \quad
D^\ast_l(k)=D_l(k)\begin{bmatrix}
                      \widehat{Q}_{11}(k) \\
                      \widehat{Q}_{21}(k)
                    \end{bmatrix}\\
w^\ast_l(k)&=w_l(k)\\
&~~~+B_l(k)\begin{bmatrix}
                      \widehat{Q}_{12}(k)Q_{21}(k) & \widehat{Q}_{12}(k)Q_{22}(k)\\
                      \widehat{Q}_{22}(k)Q_{21}(k) & \widehat{Q}_{22}(k)Q_{22}(k)
                    \end{bmatrix}u_0(k)\\
v^\ast_l(k)&=v_l(k)\\
&~~~+D_l(k)\begin{bmatrix}
                      \widehat{Q}_{12}(k)Q_{21}(k) & \widehat{Q}_{12}(k)Q_{22}(k)\\
                      \widehat{Q}_{22}(k)Q_{21}(k) & \widehat{Q}_{22}(k)Q_{22}(k)
                    \end{bmatrix}u_0(k).
\endaligned
\end{equation*}

\noindent Further, $D^\ast_l(k)$ satisfies
\[
D^\ast_l(k)
=I+\delta_D(l,k)
\begin{bmatrix}
\widehat{Q}_{11}(k)\\
\widehat{Q}_{21}(k)
\end{bmatrix}.
\]

\end{enumerate}
\end{thm}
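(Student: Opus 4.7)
The plan is to read off each claim from the structural identities supplied by Lemma~\ref{lem01}. Condition (\ref{eq05}) guarantees that $D(k)\Xi(k)$ is nonsingular, so $Q(k)$ and $Q^{-1}(k)$ are well-defined and the map $u_l^\ast(k) = Q(k) u_l(k)$ is a legitimate nonsingular transformation that admits the split $[u^{\ast\tp}_{1,l}(k),\,u^{\ast\tp}_{2,l}(k)]^\tp$ with dimensions $p$ and $m-p$. Correspondingly, I would partition $u_l(k) = [u^\tp_{1,l}(k),\,u^\tp_{2,l}(k)]^\tp$ to match the block structure of $D(k)$ and $\Xi(k)$, so that the update law (\ref{eq03}) with $\Gamma(k)\equiv 0$ decouples as $\Delta u_{i,l}(k) = \Xi_i(k) e_l(k)$ for $i=1,2$.

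For property~1, I would compute $\Delta u^\ast_{2,l}(k) = Q_{21}(k)\Xi_1(k) e_l(k) + Q_{22}(k)\Xi_2(k) e_l(k)$, insert the formulas for $Q_{21}(k)$ and $Q_{22}(k)$ from Lemma~\ref{lem01}, and use $D_1(k)\Xi_1(k) + D_2(k)\Xi_2(k) = D(k)\Xi(k)$ to witness a telescoping cancellation that yields $\Delta u^\ast_{2,l}(k)\equiv 0$, giving (\ref{eq10}). Property~2 is then immediate: since $Q_{11}(k)=D_1(k)$ and $Q_{12}(k)=D_2(k)$, we have $u^\ast_{1,l}(k) = D(k) u_l(k)$, and pre-multiplying (\ref{eq03}) by $D(k)$ produces $u^\ast_{1,l+1}(k) - u^\ast_{1,l}(k) = D(k)\Xi(k) e_l(k)$, identifying $\Xi^\ast(k) = D(k)\Xi(k)$ as in (\ref{eq11}).

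For the equivalent dynamics (\ref{eq12}), I would invert the transformation to write $u_l(k) = [\widehat{Q}^\tp_{11}(k),\,\widehat{Q}^\tp_{21}(k)]^\tp u^\ast_{1,l}(k) + [\widehat{Q}^\tp_{12}(k),\,\widehat{Q}^\tp_{22}(k)]^\tp u^\ast_{2,l}(k)$ and then use property~1 to replace $u^\ast_{2,l}(k)$ by $u^\ast_{2,0}(k) = [Q_{21}(k),\,Q_{22}(k)] u_0(k)$. Substituting into (\ref{eq01}), the $u^\ast_{1,l}(k)$ terms assemble into $B^\ast_l(k)$ and $D^\ast_l(k)$, while the iteration-independent $u^\ast_{2,0}$ terms fold into the stated $w^\ast_l(k)$ and $v^\ast_l(k)$. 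The closing identity for $D^\ast_l(k)$ will follow by splitting $D_l(k) = D(k) + \delta_D(l,k)$ through (\ref{eq04}) and verifying $D(k)[\widehat{Q}^\tp_{11}(k),\,\widehat{Q}^\tp_{21}(k)]^\tp = (D_1(k)\Xi_1(k)+D_2(k)\Xi_2(k))[D(k)\Xi(k)]^{-1} = I$.

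The principal obstacle is not analytical but notational: one must simultaneously track the input-channel partition of $u_l(k)$ and the transformation-induced partition of $u^\ast_l(k)$, while keeping the block formulas of $Q(k)$ and $Q^{-1}(k)$ consistent. Once the key cancellation $\Delta u^\ast_{2,l}\equiv 0$ is carried out cleanly, every other claim reduces to direct block-matrix algebra driven by the explicit expressions in Lemma~\ref{lem01}.
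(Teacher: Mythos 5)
Your proposal is correct and follows essentially the same route as the paper: both rest on the key identity $Q(k)\Xi(k)=\bigl[(D(k)\Xi(k))^{\tp}~0\bigr]^{\tp}$ (which you verify block-by-block via the cancellation $Q_{21}(k)\Xi_{1}(k)+Q_{22}(k)\Xi_{2}(k)=0$, while the paper states it compactly), followed by substituting $u_{l}(k)=Q^{-1}(k)u^{\ast}_{l}(k)$ with $u^{\ast}_{2,l}(k)=[Q_{21}(k)~Q_{22}(k)]u_{0}(k)$ into the plant equations, and your closing check $D(k)[\widehat{Q}^{\tp}_{11}(k)~\widehat{Q}^{\tp}_{21}(k)]^{\tp}=I$ is exactly what the paper leaves implicit. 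No gaps.
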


\begin{proof}
Let us employ the construction of $Q(k)$ in Lemma \ref{lem01}, with which we can validate
\begin{equation}\label{eq13}
\aligned
Q(k)\Xi(k)=\begin{bmatrix}
             D(k)\Xi(k) \\
             0
           \end{bmatrix}.
\endaligned
\end{equation}

\noindent If we consider the transformation (\ref{eq09}) for the updating law (\ref{eq03}), then with $\Gamma(k)\equiv0$, $\forall k\in\mathbb{Z}_{N}$, we can derive
\begin{equation*}
\aligned
u^\ast_{l+1}(k)=u^\ast_{l}(k)+[Q(k)\Xi(k)]e_l(k)
\endaligned
\end{equation*}

\noindent which, together with (\ref{eq13}), reads in a structured block form as
\begin{equation}\label{eq14}
\aligned
\begin{bmatrix}
  u^\ast_{1,l+1}(k) \\
  u^\ast_{2,l+1}(k)
\end{bmatrix}=
\begin{bmatrix}
  u^\ast_{1,l}(k) \\
  u^\ast_{2,l}(k)
\end{bmatrix}+
\begin{bmatrix}
  D(k)\Xi(k) \\
  0
\end{bmatrix}e_l(k).
\endaligned
\end{equation}

\noindent Thus, (\ref{eq10}) and  (\ref{eq11}) follow immediately from (\ref{eq14}). By combining (\ref{eq09}) with (\ref{eq10}), we can obtain
\begin{equation*}
\aligned
u^\ast_{2,l}(k)&=\begin{bmatrix}
                    0& I
                  \end{bmatrix}u^\ast_{0}(k)\\
&=\begin{bmatrix}
    Q_{21}(k)&Q_{22}(k)
  \end{bmatrix}u_0(k)
\endaligned
\end{equation*}

\noindent which can be inserted to deduce
\begin{equation}\label{eq15}
\aligned
B_l(k)u_l(k)&=B_l(k)Q^{-1}(k)u^\ast_l(k)\\
&=B_l(k)\begin{bmatrix}
       \widehat{Q}_{11}(k) & \widehat{Q}_{12}(k) \\
       \widehat{Q}_{21}(k) & \widehat{Q}_{22}(k)
     \end{bmatrix}\begin{bmatrix}
                                     u^\ast_{l,1}(k) \\
                                     u^\ast_{l,2}(k)
                                   \end{bmatrix}\\
&=B_l(k)\begin{bmatrix}
       \widehat{Q}_{11}(k)  \\
       \widehat{Q}_{21}(k)
     \end{bmatrix}u^\ast_{1,l}(k)+B_l(k)\begin{bmatrix}
       \widehat{Q}_{12}(k)  \\
       \widehat{Q}_{22}(k)
     \end{bmatrix}u^\ast_{2,l}(k)\\
&=B^\ast_l(k)u^\ast_{1,l}(k)\\
&~~~+B_l(k)\begin{bmatrix}
                      \widehat{Q}_{12}(k)Q_{21}(k) & \widehat{Q}_{12}(k)Q_{22}(k)\\
                      \widehat{Q}_{22}(k)Q_{21}(k) & \widehat{Q}_{22}(k)Q_{22}(k)
                    \end{bmatrix}u_0(k)\\
\endaligned
\end{equation}
and
\begin{equation}\label{eq16}
\aligned
D_l(k)u_l(k)&=D_l(k)Q^{-1}(k)u^\ast_l(k)\\
&=D_l(k)\begin{bmatrix}
       \widehat{Q}_{11}(k) & \widehat{Q}_{12}(k) \\
       \widehat{Q}_{21}(k) & \widehat{Q}_{22}(k)
     \end{bmatrix}\begin{bmatrix}
                                     u^\ast_{l,1}(k) \\
                                     u^\ast_{l,2}(k)
                                   \end{bmatrix}\\
&=D_l(k)\begin{bmatrix}
       \widehat{Q}_{11}(k)  \\
       \widehat{Q}_{21}(k)
     \end{bmatrix}u^\ast_{1,l}(k)+D_l(k)\begin{bmatrix}
       \widehat{Q}_{12}(k)  \\
       \widehat{Q}_{22}(k)
     \end{bmatrix}u^\ast_{2,l}(k)\\
&=D^\ast_l(k)u^\ast_{1,l}(k)\\
&~~~+D_l(k)\begin{bmatrix}
                      \widehat{Q}_{12}(k)Q_{21}(k) & \widehat{Q}_{12}(k)Q_{22}(k)\\
                      \widehat{Q}_{22}(k)Q_{21}(k) & \widehat{Q}_{22}(k)Q_{22}(k)
                    \end{bmatrix}u_0(k).
\endaligned
\end{equation}

\noindent If we substitute (\ref{eq15}) and (\ref{eq16}) into (\ref{eq01}), then we can directly obtain (\ref{eq12}). In addition, $D^\ast_l(k)=I+\delta_D(l,k)[\widehat{Q}^{\tp}_{11}(k),\widehat{Q}^{\tp}_{21}(k)]^{\tp}$ can be validated by taking advantage of the block form of $Q^{-1}(k)$ in Lemma \ref{lem01}.
\end{proof}

\begin{rem}\label{rem03}
For Theorem \ref{thm01}, we perform a nonsingular linear transformation on the updating process with respect to iteration such that two essentially different updating patterns emerge for the input signal: 1) only $p$ channels of the $m$-channel input are needed to be updated for the tracking of any $p$-channel desired output reference, and 2) the remainder $(m-p)$ channels of the input are unchanged during the iteration process. In particular, the input over unchanged channels plays a role as the iteration-independent load disturbances in the ILC process. Further, this brings a benefit that the ILC problems for general (nonsquare) MIMO systems can be equivalently transformed into those for the specific class of square MIMO systems with the same input and output channels. By $D^{\ast}_{l}(k)=I+\delta_D(l,k)[\widehat{Q}^{\tp}_{11}(k),\widehat{Q}^{\tp}_{21}(k)]^{\tp}$, we know from (\ref{eq12}) that in addition to the distinguishment of different input updating patterns, the SET result in Theorem \ref{thm01} can realize the ``one-to-one control'' of the controlled system, especially when the nonrepetitive uncertainty disappears. Another fact worth emphasizing for (\ref{eq12}) is that the disturbance signals $w^{\ast}_{l}(k)$ and $v^{\ast}_{l}(k)$ are independent of its system signals $x_{l}(k)$, $u_{1,l}^{\ast}(k)$ and $y_{l}(k)$.
%
\end{rem}

It is worth noting that the nonsingular transformation matrix in Lemma \ref{lem01} is constructed by taking advantage of the updating pattern of ILC systems. This is actually a salient feature of ILC that resorts to only the relative degree condition (A2)---some basic knowledge of the system structure (see also \cite{acm:07}). In the updating law of input, the learning gain matrix simultaneously plays the role as a filter for the additional driven force caused by the tracking error such that it is possible to perform the SET analysis in Theorem \ref{thm01}. To be more specific, for the ILC system (\ref{eq01}) and (\ref{eq03}) with $\Gamma(k)\equiv0$, $\forall k\in\mathbb{Z}_{N}$, the input dynamics along the iteration axis are described by
\begin{equation}\label{eq17}
\aligned
u_{l+1}(k)&=u_l(k)+\Xi(k)[r_l(k)-y_l(k)]\\
&=u_l(k)+\Xi(k)r_l(k)\\
&~~~-\Xi(k)[C_l(k)x_l(k)+D_l(k)u_l(k)+v_l(k)]\\
&=[I-\Xi(k)D_l(k)]u_l(k)+\zeta_l(k)
\endaligned
\end{equation}

\noindent where $\zeta_l(k)$ is given by
\begin{equation}\label{eq18}
\aligned
\zeta_l(k)=\Xi(k)[r_l(k)-C_l(k)x_l(k)-v_l(k)].
\endaligned
\end{equation}

\noindent Clearly, (\ref{eq17}) represents a discrete system of $u_{l}(k)$ with respect to the iteration $l\in\mathbb{Z}_{+}$ for any time step $k\in\mathbb{Z}_{N}$, where $\zeta_{l}(k)$ denotes the driven force for the updating process of $u_{l}(k)$ from iteration to iteration. It can be seen from (\ref{eq18}) that for $\zeta_{l}(k)$, there exists an iteration-independent ``filter matrix $\Xi(k)$,'' regardless of the presences of any iteration-dependent $r_l(k)$, $C_{l}(k)$, $v_{l}(k)$, and $x_{l}(k)$. Furthermore, it will be seen that the SET result in Theorem \ref{thm01} simplifies convergence conditions of ILC and avoids causing contradiction and conflict in the ILC convergence analysis.

Regarding the convergence analysis of the ILC system (\ref{eq01}) and (\ref{eq03}), we need to consider the nonrepetitive uncertainties. If we take into account the uncertain $D_{l}(k)$, then the robust ILC convergence analysis generally needs a matrix norm condition, instead of (\ref{eq05}), as follows:
\begin{equation}\label{eq06}
\aligned
\|I-D_{l}(k)\Xi(k)\| <1, \quad \forall l\in\Z_{+},k\in\Z_{N}
\endaligned
\end{equation}

\noindent which however can not be easily verified due to the uncertainty $\delta_{D}(l,k)$ of $D_{l}(k)$. But, if we notice the boundedness of $\delta_{D}(l,k)$ in (A1), then we may model $\delta_{D}(l,k)$ in the form of
\begin{equation}\label{eq07}
\aligned
&\delta_{D}(l,k)=E(k)\Sigma_{l}(k)F(k)\\
&\hbox{subject to}~\Sigma_{l}^{\tp}(k)\Sigma_{l}(k)\leq I
\endaligned,\quad\forall l\in\Z_{+},k\in\Z_{N}
\end{equation}

\noindent for some nominal matrices $E(k)$ and $F(k)$ and some uncertain matrix $\Sigma_{l}(k)$, and consequently, we can accomplish (\ref{eq06}) in the sense of the spectral norm provided that the following linear matrix inequality (LMI) condition is satisfied for some scalar $\lambda>0$:
\begin{equation}\label{eq08}
\aligned
\begin{bmatrix}
-I&(\star)&(\star)&(\star)\\
I-D(k)\Xi(k)&-I&(\star)&(\star)\\
0&E^{\tp}(k)&-\lambda I&(\star)\\
F(k)\Xi(k)&0&0&-\lambda I
\end{bmatrix}<0,\quad\forall k\in\Z_{N}
\endaligned
\end{equation}

\noindent where $(\star)$ denotes the terms induced by the symmetry. Clearly, (\ref{eq05}) is only a necessary condition of (\ref{eq08}).

\subsection{Robust ILC Convergence}

Next, we consider the robust tracking problem of the system (\ref{eq01}) under the action of an ILC algorithm given by the updating law (\ref{eq03}). It is worth highlighting that the system (\ref{eq01}) is subject to nonrepetitive model uncertainties arising from $A_{l}(k)$, $B_{l}(k)$, $C_{l}(k)$ and $D_{l}(k)$. According to the robust ILC results of MIMO nonrepetitive uncertain systems (see, e.g., \cite{mm:171}), two conditions are needed to guarantee robust convergence of the ILC system (\ref{eq01}) and (\ref{eq03}), with one condition ensuring the boundedness of all the system signals and the other condition accomplishing the robust convergence of the tracking error. More specifically, let us neglect the nonrepetitive uncertainties, and if we resort to the CM-based approach to the ILC convergence analysis, then for the ILC system (\ref{eq01}) and (\ref{eq03}) with $\Gamma(k)\equiv0$, the boundedness condition collapses into
\begin{equation}\label{eq19}
\aligned
\rho(I-\Xi(k)D(k))<1
,\quad k \in \Z_{N}
\endaligned
\end{equation}

\noindent and the robust ILC convergence condition turns into exactly the spectral radius condition (\ref{eq05}). However, it is straightforward from \cite[Theorem 1.3.20]{hj:85} that (\ref{eq05}) and (\ref{eq19}) can not hold simultaneously for the general nonsquare MIMO systems (i.e., $m\neq p$). What results in the condition contradiction for robust ILC? In fact, this condition contradiction is a basic problem in ILC since the boundedness analysis of system signals resorting to the control input evolution along the iteration axis needs a condition like (\ref{eq19}), while the convergence analysis of ILC resorting to the tracking error requires a condition like (\ref{eq05}) (see, e.g., \cite{mjdy:11b}). Are there any feasible ways to address the fundamental problem of condition contradiction in ILC? To our knowledge, these problems have not been completely solved in ILC.

For the abovementioned problems, if we resort to the SET result in Theorem \ref{thm01}, we can find feasible solutions to them. Namely, for the ILC system (\ref{eq01}) and (\ref{eq03}), if we instead consider (\ref{eq11}) and (\ref{eq12}), then due to
\begin{equation}\label{eq20}
\aligned
I-\Xi^{\ast}(k)D^{\ast}_l(k)&=I-D(k)\Xi(k)\left[I+\delta_D(l,k)\begin{bmatrix}
                                                         \widehat{Q}_{11}(k) \\
                                                         \widehat{Q}_{21}(k)
                                                       \end{bmatrix}\right]\\
&=I-D(k)\Xi(k)-D(k)\Xi(k)\delta_D(l,k)\Xi(k)\\
&~~~\times\left[D(k)\Xi(k)\right]^{-1}\\
&=\left[D(k)\Xi(k)\right]\left[I-D(k)\Xi(k)-\delta_D(l,k)\Xi(k)\right]\\
&~~~\times \left[D(k)\Xi(k)\right]^{-1}\\
&=\left[D(k)\Xi(k)\right]\left[I-D_l(k)\Xi(k)\right]\left[D(k)\Xi(k)\right]^{-1}\\
&=\left[D(k)\Xi(k)\right]\left[I-D^{\ast}_l(k)\Xi^{\ast}(k)\right]\left[D(k)\Xi(k)\right]^{-1}
\endaligned
\end{equation}

\noindent a matrix similarity property emerges between $I-D^{\ast}_l(k)\Xi^{\ast}(k)$ and $I-\Xi^{\ast}(k)D^{\ast}_l(k)$, $\forall k\in\Z_{N}$. In particular, when the nonrepetitive uncertainties are not considered, i.e., $D^{\ast}_l(k)\equiv D^{\ast}(k)$, $\forall l\in\Z_{+}$, (\ref{eq20}) implies
\[
\rho\left(I-D^{\ast}(k)\Xi^{\ast}(k)\right)=\rho\left(I-\Xi^{\ast}(k)D^{\ast}(k)\right),\quad\forall k\in\Z_{N}
\]

\noindent namely, the contradiction problem between (\ref{eq05}) and (\ref{eq19}) can be avoided. As a consequence,  the problem of condition contradiction for the ILC system (\ref{eq01}) and (\ref{eq03}) may naturally disappear for the ILC system (\ref{eq11}) and (\ref{eq12}). By benefiting from this observation, we can establish the following theorem for robust convergence of the ILC system (\ref{eq01}) and (\ref{eq03}).

\begin{thm}\label{thm02}
Consider the system (\ref{eq01}) under the updating law (\ref{eq03}) with $\Gamma(k)\equiv0$, and let Assumption (A1) hold and $\delta_{D}(l,k)$ be in the form of (\ref{eq07}). If the LMI condition (\ref{eq08}) is satisfied, then the boundedness of system signals and the robust tracking task of ILC can be simultaneously achieved, namely,
\begin{enumerate}
\item
$x_l(k)$, $u_l(k)$ and $y_l(k)$ are uniformly bounded, i.e.,
\begin{equation*}
\aligned
\sup_{l \geq 0}\max_{0 \leq k \leq N} \|x_l(k)\| &\leq \beta_x, ~~~\sup_{l \geq 0}\max_{0 \leq k \leq N} \|u_l(k)\| &\leq \beta_u\\
\sup_{l \geq 0}\max_{0 \leq k \leq N} \|y_l(k)\| &\leq \beta_y
\endaligned
\end{equation*}

\noindent for some finite bounds $\beta_x \geq 0$, $\beta_u \geq 0$, and $\beta_y \geq 0$;
  \item $e_l(k)$ is uniformly bounded and fulfills (\ref{eq02}) for some finite bound $\epsilon$ depending continuously on the bounds $\overline{\beta}_A$, $\overline{\beta}_B$, $\overline{\beta}_C$, $\overline{\beta}_D$, $\overline{\beta}_{x_0}$, $\overline{\beta}_w$, $\overline{\beta}_v$, and $\overline{\beta}_r$ of the nonrepetitive uncertainties (i.e., $\epsilon \to 0$ if $\overline{\beta}_A\to 0$, $\overline{\beta}_B\to 0$, $\overline{\beta}_C\to 0$, $\overline{\beta}_D\to 0$, $\overline{\beta}_{x_0}\to 0$, $\overline{\beta}_w\to 0$, $\overline{\beta}_v\to 0$, and $\overline{\beta}_r\to 0$).
\end{enumerate}
\end{thm}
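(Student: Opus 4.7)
The plan is to combine the SET of Theorem \ref{thm01} with a double-dynamics analysis that is made consistent by the similarity structure in (\ref{eq20}). First I would show that the LMI (\ref{eq08}), reduced by Schur complements and the $\mathcal{S}$-procedure applied to $\Sigma_l^{\tp}(k)\Sigma_l(k)\leq I$ from (\ref{eq07}), is equivalent to the existence of a constant $\alpha\in(0,1)$ with
\[
\left\|I-D_l(k)\Xi(k)\right\|_{2}\leq\alpha,\quad\forall l\in\Z_{+},\ k\in\Z_{N}.
\]
The block structure of Lemma \ref{lem01} gives $[\widehat{Q}_{11}^{\tp}(k),\widehat{Q}_{21}^{\tp}(k)]^{\tp}=\Xi(k)[D(k)\Xi(k)]^{-1}$, so a direct computation yields $D_l^{\ast}(k)\Xi^{\ast}(k)=D_l(k)\Xi(k)$, and the same uniform contraction bound applies to $I-D_l^{\ast}(k)\Xi^{\ast}(k)$. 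This is precisely the unified condition that dissolves the contradiction between (\ref{eq05}) and (\ref{eq19}) in the transformed, square setting.

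Working throughout with the equivalent system (\ref{eq11})-(\ref{eq12}), the next step is to derive two forward-difference recursions along the iteration axis. Differencing the output equation of (\ref{eq12}) and substituting (\ref{eq11}) produces
\[
e_{l+1}(k)=[I-D_{l+1}^{\ast}(k)\Xi^{\ast}(k)]e_l(k)+g_l(k),
\]
where the perturbation
\[
g_l(k)=-C_l(k)\Delta x_l(k)-\Delta C_l(k)x_{l+1}(k)-\Delta D_l^{\ast}(k)u_{1,l}^{\ast}(k)-\Delta v_l^{\ast}(k)+\Delta r_l(k)
\]
collects the increments of every nonrepetitive datum. Differencing the state equation of (\ref{eq12}) yields the companion recursion
\[
\Delta x_l(k+1)=A_{l+1}(k)\Delta x_l(k)+\Delta A_l(k)x_l(k)+B_{l+1}^{\ast}(k)\Xi^{\ast}(k)e_l(k)+\Delta B_l^{\ast}(k)u_{1,l}^{\ast}(k)+\Delta w_l^{\ast}(k),
\]
with $\Delta x_l(0)=\delta_{x_{0}}(l+1)-\delta_{x_{0}}(l)$ uniformly bounded by (A1). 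From the explicit formulas in Theorem \ref{thm01} and Remark \ref{rem01}, every $\Delta$-increment of an uncertain quantity is itself uniformly bounded whenever the signals it multiplies are.

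The core argument is then a joint induction on $k\in\Z_{N}$. For $k=0$, the boundedness of $x_l(0)$ comes from (A1), and (\ref{eq10}) fixes $u_{2,l}^{\ast}(0)=u_{2,0}^{\ast}(0)$ in $l$. Assuming that $x_l(j)$, $u_l(j)$, $e_l(j)$ are uniformly bounded in $l$ for all $j\leq k$, the term $g_l(k)$ is uniformly bounded in $l$; combined with the contraction $\alpha<1$, the scalar recursion $\|e_{l+1}(k)\|\leq\alpha\|e_l(k)\|+\sup_{l}\|g_l(k)\|$ yields both the uniform boundedness of $e_l(k)$ and the steady bound $\limsup_{l}\|e_l(k)\|\leq\sup_{l}\|g_l(k)\|/(1-\alpha)$. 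Uniform boundedness of $u_{1,l}^{\ast}(k)$ then follows by solving the output equation $D_l^{\ast}(k)u_{1,l}^{\ast}(k)=r_l(k)-C_l(k)x_l(k)-v_l^{\ast}(k)-e_l(k)$, using that $D_l^{\ast}(k)=[D_l(k)\Xi(k)][D(k)\Xi(k)]^{-1}$ is uniformly invertible because $\|I-D_l(k)\Xi(k)\|\leq\alpha<1$. Combining this with (\ref{eq10}) through the transformation (\ref{eq09}) bounds $u_l(k)$ uniformly, and the state equation then bounds $x_l(k+1)$, closing the induction step. The steady bound $\varepsilon$ of (\ref{eq02}) is obtained by propagating the $\sup_{l}\|g_l\|/(1-\alpha)$ estimate through the induction; since all involved constants are linear in the $\overline{\beta}$'s and $\Z_{N}$ is finite, the continuous dependence claim in 2) follows by tracing coefficients.

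The delicate point I expect is the apparent circularity between the iteration bound on $e_l(k)$ and the iteration bound on $u_{1,l}^{\ast}(k)$: the naive recursion $u_{1,l+1}^{\ast}(k)=u_{1,l}^{\ast}(k)+\Xi^{\ast}(k)e_l(k)$ only controls the input through a telescoped sum of errors, which need not converge absolutely when only a bounded steady error is available. The SET architecture sidesteps this by turning the input-output relation into a perturbation of the identity, so that $u_{1,l}^{\ast}(k)$ can be recovered algebraically from the output equation uniformly in $l$. Making this recovery genuinely uniform in $l$ is why the $\alpha<1$ bound must come from (\ref{eq08}) rather than merely from the pointwise spectral radius condition (\ref{eq05}).
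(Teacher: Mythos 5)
There is a genuine gap in the boundedness half of your argument. You propose to establish uniform boundedness at each fixed $k$ by first contracting the error recursion $e_{l+1}(k)=[I-D_{l+1}^{\ast}(k)\Xi^{\ast}(k)]e_l(k)+g_l(k)$ and then recovering $u_{1,l}^{\ast}(k)$ algebraically from the output equation. But $g_l(k)$ necessarily contains the term $\Delta D_l^{\ast}(k)\,u_{1,l}^{\ast}(k)$ (the cross term $D_{l+1}^{\ast}u_{1,l+1}^{\ast}-D_{l}^{\ast}u_{1,l}^{\ast}$ cannot be written without it when $D$ is nonrepetitive), so bounding $e_l(k)$ uniformly in $l$ already presupposes a uniform bound on $u_{1,l}^{\ast}(k)$, while your algebraic recovery of $u_{1,l}^{\ast}(k)$ presupposes a bound on $e_l(k)$. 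The ``perturbation of the identity'' structure of $D_l^{\ast}(k)$ does not break this cycle: if you substitute the recovered expression for $u_{1,l}^{\ast}(k)$ back into $g_l(k)$, the contraction matrix picks up an extra perturbation $\Delta D_l^{\ast}(k)[D_l^{\ast}(k)]^{-1}$ of norm $O(\overline{\beta}_D)$, and keeping the total below one would need a smallness condition on $\overline{\beta}_D$ that the theorem does not assume. The paper closes the induction on $k$ differently: it runs the \emph{input} dynamics along the iteration axis, $\overline{u}^{\ast}_{1,l+1}(k)=[I-D_l(k)\Xi(k)]\overline{u}^{\ast}_{1,l}(k)+\overline{\zeta}^{\ast}_l(k)$ with $\overline{u}^{\ast}_{1,l}(k)=[D(k)\Xi(k)]^{-1}u^{\ast}_{1,l}(k)$, whose forcing term $\overline{\zeta}^{\ast}_l(k)$ depends only on $x_l(k)$, $r_l(k)$, $v_l(k)$ and the frozen input channels---not on $u^{\ast}_{1,l}(k)$ or $e_l(k)$. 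Hence boundedness of $x_l(k)$ gives boundedness of $u^{\ast}_{1,l}(k)$, then of $u_l(k)$, $y_l(k)$, $e_l(k)$, and finally of $x_l(k+1)$, with no circularity; this is precisely what the similarity relation (\ref{eq20}) and the change of variable buy. Only after all signals are bounded does the paper run the error/state-difference recursions (\ref{eq26})--(\ref{eq27}) to estimate the residual.

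A secondary imprecision concerns conclusion 2): you bound $\limsup_{l}\|e_l(k)\|$ by $\sup_{l}\|g_l(k)\|/(1-\alpha)$. Since $g_l(k)$ contains $C_l(k)\Delta x_l(k)$, its supremum over all $l$ (including early iterations) is $O(1)$ and does not vanish with the $\overline{\beta}$'s, so this estimate cannot deliver $\varepsilon\to0$. You need the $\limsup$ form of the contraction inequality together with a companion $\limsup$ recursion for $\Delta x_l(k)$ (driven by $e_l(k)$ and by the $\Delta$-increments of the uncertain data), propagated jointly over $k\in\Z_{N}$ starting from $\limsup_l\|\Delta x_l(0)\|\leq2\overline{\beta}_{x_0}$; this coupled analysis is what the paper imports from \cite{mm:171}. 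Your opening computation ($D_l^{\ast}(k)\Xi^{\ast}(k)=D_l(k)\Xi(k)$ and the uniform spectral-norm bound from the LMI (\ref{eq08})) is correct and matches the paper.
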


\begin{proof}
{\it 1):} Note that (\ref{eq05}) holds under the LMI condition (\ref{eq08}). For the ILC system (\ref{eq01}) and (\ref{eq03}), we can leverage the SET result in Theorem \ref{thm01}. With (\ref{eq11}) and (\ref{eq12}), we can obtain
\begin{equation}\label{eq21}
\aligned
u^\ast_{1,l+1}(k)&=u^\ast_{1,l}(k)+\Xi^\ast(k)[r_l(k)-y_l(k)]\\
&=u^\ast_{1,l}(k)+\Xi^\ast(k)r_l(k)\\
&~~~-\Xi^\ast(k)[C_l(k)x_l(k)+D^\ast_l(k)u^\ast_{1,l}(k)+v^\ast_l(k)]\\
&=[I-\Xi^\ast(k)D^\ast_l(k)]u^\ast_{1,l}(k)+\zeta^\ast_l(k)
\endaligned
\end{equation}

\noindent where $\zeta^\ast_l(k)$ is given by
\[
\aligned
\zeta^\ast_l(k)=\Xi^\ast(k)[r_l(k)-C_l(k)x_l(k)-v^\ast_l(k)].
\endaligned
\]

\noindent Then inserting (\ref{eq20}) into (\ref{eq21}) yields
\[
\aligned
u^\ast_{1,l+1}(k)&=\left[D(k)\Xi(k)\right]\left[I-D_l(k)\Xi(k)\right]\left[D(k)\Xi(k)\right]^{-1}u^\ast_{1,l}(k)\\
&~~~+\zeta^\ast_l(k)
\endaligned
\]

\noindent which, together with $\overline{u}^\ast_{1,l}(k)=\left[D(k)\Xi(k)\right]^{-1}u^\ast_{1,l}(k)$, leads to
\begin{equation}\label{eq22}
\aligned
\overline{u}^\ast_{1,l+1}(k)&=\left[I-D_l(k)\Xi(k)\right]\overline{u}^\ast_{1,l}(k)+\overline{\zeta}^\ast_l(k)
\endaligned
\end{equation}

\noindent where $\overline{\zeta}^\ast_l(k)$ fulfills
\[
\aligned
\overline{\zeta}^\ast_l(k)&=\left[D(k)\Xi(k)\right]^{-1}\zeta^\ast_l(k)\\
&=\left[D(k)\Xi(k)\right]^{-1}\Xi^\ast(k)[r_l(k)-C_l(k)x_l(k)-v^\ast_l(k)]\\
&=r_l(k)-C_l(k)x_l(k)-v_l(k)\\
&~~~-D_l(k)\begin{bmatrix}
                      \widehat{Q}_{12}(k)Q_{21}(k) & \widehat{Q}_{12}(k)Q_{22}(k)\\
                      \widehat{Q}_{22}(k)Q_{21}(k) & \widehat{Q}_{22}(k)Q_{22}(k)
                    \end{bmatrix}u_0(k).\\
\endaligned
\]

\noindent Besides, we can redescribe (\ref{eq12}) in the form of
\begin{equation}\label{eq23}
\aligned
x_l(k+1)&=A_l(k)x_l(k)+B^\ast_l(k)u^\ast_{1,l}(k)+w^\ast_l(k)\\
&=A_l(k)x_l(k)+B^\ast_l(k) \left[D(k)\Xi(k)\right] \left[D(k)\Xi(k)\right]^{-1}u^\ast_{1,l}(k)\\
&~~~+w^\ast_l(k)\\
&=A_l(k)x_l(k)+B^\ast_l(k) D(k)\Xi(k) \overline{u}^\ast_{1,l}(k)+w^\ast_l(k)
\endaligned
\end{equation}

\noindent for which $w^\ast_l(k)$ is uniformly bounded for any $k \in \Z_N$ and any $l \in \Z_+$, i.e.,
\begin{equation*}
\aligned
\|w^\ast_l(k) \|& \leq \beta_w+ \beta_B \max_{0 \leq k \leq N} \left\| \begin{bmatrix}
                                                                      \widehat{Q}_{12}(k) \\
                                                                      \widehat{Q}_{22}(k)
                                                                    \end{bmatrix} \right\|\\
&~~~~~~~~~\times \max_{0 \leq k \leq N} \left\|\begin{bmatrix}
                                           Q_{21}(k) & Q_{22}(k)
                                         \end{bmatrix} \right\| \max_{0 \leq k \leq N} \|u_0(k) \|.
\endaligned
\end{equation*}

\noindent Under the condition (\ref{eq08}), (\ref{eq06}) can be achieved in the sense of the spectral norm. Based on (\ref{eq22}) and (\ref{eq23}) and with \cite[Lemma 2]{mm:171}, we implement the inductive analysis for $k \in \Z_N$ by following the same lines as the proof of \cite[Theorem 1]{mm:171} to obtain that $x_l(k)$ and $\overline{u}^\ast_{1,l}(k)$ are uniformly bounded such that
\begin{equation}\label{eq24}
\aligned
\sup_{l \geq 0}\max_{0 \leq k \leq N} \|x_l(k)\| \leq \beta_x, ~~\sup_{l \geq 0}\max_{0 \leq k \leq N} \|\overline{u}^\ast_{1,l}(k)\| \leq \beta_{\overline{u}^\ast_1}
\endaligned
\end{equation}

\noindent for some finite bounds $\beta_x \geq 0$ and $\beta_{\overline{u}^\ast_1} \geq 0$. In addition, noticing $\overline{u}^\ast_{1,l}(k)=\left[D(k)\Xi(k)\right]^{-1}u^\ast_{1,l}(k)$, we can directly derive that
\begin{equation}\label{eq25}
\aligned
 \|u^\ast_{1,l}(k)\| &\leq \left \|D(k)\Xi(k) \right\| \|\overline{u}^\ast_{1,l}(k)\|\\
 &\leq \max_{0 \leq k \leq N} \left \|D(k)\Xi(k) \right\| \beta_{\overline{u}^\ast_1}\\
 & \triangleq \beta_{u^\ast_1},~~ \forall l \in \Z_+, k \in \Z_N.
\endaligned
\end{equation}
Also, we use (\ref{eq10}) to obtain
\[
\aligned
\|u^\ast_{2,l}(k) \|&\equiv \| u^\ast_{2,0}(k)\|\\
& \leq \|\begin{bmatrix}
           Q_{21}(k) & Q_{21}(k)
         \end{bmatrix} \| \|u_0(k) \|\\
& \leq \max_{k \in \Z_N} \{\|\begin{bmatrix}
           Q_{21}(k) & Q_{21}(k)
         \end{bmatrix} \| \|u_0(k) \|\}\\
& \triangleq \beta_{u^\ast_2},~~ \forall l \in \Z_+, k \in \Z_N
\endaligned
\]

\noindent which, together with (\ref{eq25}), further yields
\begin{equation*}
\aligned
\sup_{l \geq 0}\max_{0 \leq k \leq N} \|u^\ast_l(k)\| &\leq \sup_{l \geq 0}\max_{0 \leq k \leq N} \|u^\ast_{1,l}(k)\|\\
&~~~+\sup_{l \geq 0}\max_{0 \leq k \leq N} \|u^\ast_{2,l}(k)\|\\
&\leq \beta_{u^\ast}
\endaligned
\end{equation*}

\noindent where $\beta_{u^\ast}=\beta_{u^\ast_1}+\beta_{u^\ast_2}$. It thus follows
\begin{equation*}
\aligned
\sup_{l \geq 0}\max_{0 \leq k \leq N} \|u_l(k)\|\leq &\beta_{u^\ast} \max_{k \in \Z_N} \| Q^{-1}(k)\| \triangleq \beta_u.
\endaligned
\end{equation*}

\noindent Then, we can validate $\sup_{l \geq 0}\max_{0 \leq k \leq N} \|y_l(k)\| \leq \beta_y$ holds for $\beta_y=\beta_C \beta_x +\beta_D \beta_u  +\beta_v$. The uniform boundedness is achieved for all system signals.

{\it 2):} By using (\ref{eq01}) and (\ref{eq03}), we study $\Delta x_l(k)$ and can obtain
\begin{equation}\label{eq26}
\aligned
\Delta x_l(k+1)&=x_{l+1}(k+1)-x_l(k+1)\\
&= A_{l+1}(k)x_{l+1}(k)+B_{l+1}(k)u_{l+1}(k)+w_{l+1}(k)\\
&~~~-A_{l}(k)x_{l}(k)-B_{l}(k)u_{l}(k)-w_{l}(k)\\
&=A_l(k)\Delta x_l(k) + x_{l+1}(k) \Delta A_l(k)\\
&~~~+B_l(k)\Delta u_l(k) + u_{l+1}(k) \Delta B_l(k) +\Delta w_{l}(k)\\
&=B_l(k)\Xi(k) e_l(k)+ A_l(k)\Delta x_l(k) + x_{l+1}(k) \Delta A_l(k)\\
&~~~+u_{l+1}(k) \Delta B_l(k) +\Delta w_{l}(k).
\endaligned
\end{equation}
If we exploit (\ref{eq26}), then the tracking error $e_l(k)$ satisfies
\begin{equation}\label{eq27}
\aligned
e_{l+1}(k)&=r_{l+1}(k)-y_{l+1}(k)\\
&=r_l(k)-y_{l}(k)+y_{l}(k)-y_{l+1}(k)+\Delta r_l(k)\\
&=e_l(k)+\Delta r_l(k)+C_l(k)x_l(k)+D_l(k)u_l(k)+v_l(k)\\
&~~~-C_{l+1}(k)x_{l+1}(k)-D_{l+1}(k)u_{l+1}(k)-v_{l+1}(k)\\
&=e_l(k)+\Delta r_l(k)-C_l(k) \Delta x_l(k)-x_{l+1}(k) \Delta C_l(k)\\
&~~~-D_l(k) \Delta u_l(k)-u_{l+1}(k) \Delta D_l(k)-\Delta v_l(k)\\
&=[I-D_l(k)\Xi(k)] e_l(k) +\tau_l(k)
\endaligned
\end{equation}

\noindent where $\tau_l(k)$ is given by
\[
\aligned
\tau_l(k)&=-C_l(k) \Delta x_l(k)-x_{l+1}(k) \Delta C_l(k)-u_{l+1}(k) \Delta D_l(k)\\
&~~~+\Delta r_l(k)-\Delta v_l(k).
\endaligned
\]

\noindent Based on (\ref{eq26}) and (\ref{eq27}) and with \cite[Lemma 1]{mm:171}, we can follow the same steps as the proof of \cite[Theorem 3]{mm:171} to develop that under the condition (\ref{eq08}), not only is $e_l(k)$ uniformly bounded but also (\ref{eq02}) is ensured.
\end{proof}

\begin{rem}\label{rem04}
In Theorem \ref{thm02}, the boundedness and convergence analyses of ILC systems are exploited with a unified condition, regardless of the considered nonsquare plants and the adopted double-dynamics analysis approach to ILC. This benefits from the SET-based result in Theorem \ref{thm01} that is helpful to deal with the common condition contradiction problem arising from the double-dynamics analysis approach to ILC (see, e.g., \cite{mm:171}). Moreover, it is worth highlighting that the results of Theorems \ref{thm01} and \ref{thm02} are robust with respect to nonrepetitive uncertainties, including not only initial state shifts and external disturbances but also reference trajectories and model uncertainties in ILC. This in fact contributes to enhancing the robust ILC framework of nonrepetitive uncertain systems (see, e.g., \cite{mm:171,mm:172}).
\end{rem}

\section{Extensions to ILC Systems with Nonzero Relative Degrees}\label{sec6}

\subsection{SET-Based Convergence Analysis}

If the nonrepetitive uncertainties of $B_l(k)$ and $C_l(k)$ do not exist and the control input can not directly influence the output, i.e., $B_l(k)\equiv B(k)$, $C_l(k)\equiv C(k)$ and $D_l(k)\equiv0$ are satisfied, then the system (\ref{eq01}) has a nonzero relative degree, which can be simplified as
\begin{equation}\label{eq28}
\left\{\aligned
x_{l}(k+1)&=A_{l}(k)x_{l}(k)+B(k)u_{l}(k)+w_{l}(k)\\
y_{l}(k)&=C(k)x_{l}(k)+v_{l}(k).
\endaligned\right.
\end{equation}

\noindent For the system \eqref{eq28}, we can choose to adopt the gain matrix $\Gamma(k)$ in (\ref{eq03}) to accomplish the tracking objective (\ref{eq02}) (i.e., setting $\Xi(k)\equiv0$, $\forall k\in\Z_{N}$). For this case, the corresponding relative degree condition is proposed as follows.

\begin{enumerate}
\item [(A3)]
{\it Relative Degree Condition}: For any $k\in\mathbb{Z}_{N-1}$, the coupling matrix $C(k+1)B(k)$ is of full-row rank.
\end{enumerate}

From the relative degree condition (A3), we denote $B(k)=\left[B_{1}(k)~B_{2}(k)\right]$ with $B_{1}(k)\in\mathbb{R}^{p\times p}$ and $B_{2}(k)\in\mathbb{R}^{p\times(m-p)}$ such that $C(k+1)B_{1}(k)$ is a nonsingular matrix without any loss of generality (otherwise, this can also be realized based on the elementary transformation). We accordingly denote $\Gamma(k)=\left[\Gamma_{1}^{\tp}(k)~\Gamma_{2}^{\tp}(k)\right]^{\tp}$, where $\Gamma_{1}(k)\in\mathbb{R}^{p\times p}$ and $\Gamma_{2}(k)\in\mathbb{R}^{(m-p)\times p}$.

Similar to Lemma \ref{lem01}, we can provide a lemma for the ILC system (\ref{eq28}) and (\ref{eq03}) with $\Xi(k)\equiv0$ to construct a nonsingular transformation matrix under the relative degree condition (A3).

\begin{lem}\label{lem02}
If
\begin{equation}\label{eq29}
\rho\left(I-C(k+1)B(k)\Gamma(k)\right)<1,\quad\forall k\in\mathbb{Z}_{N-1}
\end{equation}

\noindent then a structured matrix $P(k)$ can be constructed as
\[P(k)
=\begin{bmatrix}P_{11}(k)&P_{12}(k)\\
P_{21}(k)&P_{22}(k)\end{bmatrix}\]

\noindent where
\[\aligned
P_{11}(k)&=C(k+1)B_{1}(k)\\
P_{12}(k)&=C(k+1)B_{2}(k)\\
P_{21}(k)&=-\Gamma_{2}(k)\left(C(k+1)B(k)\Gamma(k)\right)^{-1}C(k+1)B_{1}(k)\\
P_{22}(k)&=I-\Gamma_{2}(k)\left(C(k+1)B(k)\Gamma(k)\right)^{-1}C(k+1)B_{2}(k).
\endaligned\]

\noindent Further, $P(k)$ is a nonsingular matrix whose inverse matrix is given as
\[P^{-1}(k)
=\begin{bmatrix}\widehat{P}_{11}(k)&\widehat{P}_{12}(k)\\
\widehat{P}_{21}(k)&\widehat{P}_{22}(k)\end{bmatrix}\]

\noindent where
\[\aligned
\widehat{P}_{11}(k)&=\Gamma_1(k)\left(C(k+1)B(k)\Gamma(k)\right)^{-1}\\
\widehat{P}_{12}(k)&=-\left(C(k+1)B_{1}(k)\right)^{-1}C(k+1)B_{2}(k)\\
\widehat{P}_{21}(k)&=\Gamma_{2}(k)\left(C(k+1)B(k)\Gamma(k)\right)^{-1}\\
\widehat{P}_{22}(k)&=I.
\endaligned\]

\noindent In addition, there exists a learning gain matrix $\Gamma(k)$ satisfying (\ref{eq29}) if and only if the relative degree condition (A3) holds.
\end{lem}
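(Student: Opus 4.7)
The plan is to mirror, step for step, the proof strategy used for Lemma \ref{lem01}, with the coupling matrix $C(k+1)B(k)$ now playing the role that $D(k)$ played in the direct-feedthrough case. There are essentially two independent pieces to establish: (i) the equivalence between the existence of a gain $\Gamma(k)$ satisfying the spectral radius condition (\ref{eq29}) and the relative degree condition (A3); and (ii) the verification that the block matrix $P(k)$ is nonsingular with the claimed inverse.

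For piece (i), I would appeal to the same matrix-theoretic fact used in the proof of Lemma \ref{lem01} (see \cite{hj:85}): for a fixed $M\in\mathbb{R}^{p\times m}$ there exists $G\in\mathbb{R}^{m\times p}$ with $\rho(I-MG)<1$ if and only if $M$ has full-row rank. The ``only if'' direction follows because $\rho(I-MG)<1$ forces $MG$ to be nonsingular, and hence $M$ cannot have a nontrivial left null space; the ``if'' direction is witnessed explicitly by $G=M^{\tp}(MM^{\tp})^{-1}$, which gives $MG=I$ and thus $\rho(I-MG)=0$. Applying this with $M=C(k+1)B(k)$ and $G=\Gamma(k)$ yields the asserted equivalence between (\ref{eq29}) and (A3), for each $k\in\mathbb{Z}_{N-1}$.

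For piece (ii), observe that once (\ref{eq29}) holds, $C(k+1)B(k)\Gamma(k)$ is nonsingular, so every inverse appearing in the stated formulas for $P_{21}(k)$, $P_{22}(k)$, $\widehat{P}_{11}(k)$, $\widehat{P}_{12}(k)$ and $\widehat{P}_{21}(k)$ is well-defined. The remaining task is to verify by block multiplication that $P(k)P^{-1}(k)=I$ and $P^{-1}(k)P(k)=I$. The key algebraic identity needed is the conformal expansion
\[
C(k+1)B(k)\Gamma(k)=C(k+1)B_{1}(k)\Gamma_{1}(k)+C(k+1)B_{2}(k)\Gamma_{2}(k),
\]
together with the analogous expansion obtained by partitioning $B(k)=[B_{1}(k)\ B_{2}(k)]$ and $\Gamma(k)=[\Gamma_{1}^{\tp}(k)\ \Gamma_{2}^{\tp}(k)]^{\tp}$. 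Substituting these into each of the four block products reduces every off-diagonal block to $0$ and every diagonal block to $I$, after the factor $[C(k+1)B(k)\Gamma(k)]^{-1}$ cancels against $C(k+1)B(k)\Gamma(k)$.

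These calculations are bookkeeping-heavy but conceptually routine, so I expect the principal obstacle to be purely notational rather than substantive: special care is required in the $(2,2)$ block of $P(k)P^{-1}(k)$ and the $(1,1)$ block of $P^{-1}(k)P(k)$, where the term $[C(k+1)B(k)\Gamma(k)]^{-1}$ appears nested inside the products and must be simplified using the identity above. Once these reductions are carried through, both piece (i) and piece (ii) are established and the lemma follows, exactly in parallel with the brief argument given for Lemma \ref{lem01}.
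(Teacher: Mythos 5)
Your proposal is correct and follows essentially the same route as the paper, which simply notes that Lemma \ref{lem02} is proved by repeating the argument for Lemma \ref{lem01}: the matrix-theoretic equivalence between the spectral radius condition and full-row rank of the coupling matrix, followed by the observation that $C(k+1)B(k)\Gamma(k)$ is then nonsingular so the block formulas are well-defined and verified by direct multiplication. Your explicit justification of the ``if and only if'' direction (via $\Gamma(k)=M^{\tp}(MM^{\tp})^{-1}$ and the fact that $\rho(I-MG)<1$ forces $MG$ nonsingular) and the identity $C(k+1)B(k)\Gamma(k)=C(k+1)B_{1}(k)\Gamma_{1}(k)+C(k+1)B_{2}(k)\Gamma_{2}(k)$ are exactly the ingredients the paper leaves implicit, and the block computation does check out.
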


\begin{proof}
The proof of Lemma \ref{lem02} follows the same steps as that of Lemma \ref{lem01}, which is omitted here.
\end{proof}

With the nonsingular transformation matrix $P(k)$ in Lemma \ref{lem02}, a SET analysis of the ILC system (\ref{eq28}) and (\ref{eq03}) with $\Xi(k)\equiv0$ can be implemented, which is stated in the following theorem.

\begin{thm}\label{thm03}
For the system (\ref{eq28}) under the updating law (\ref{eq03}) with $\Xi(k)\equiv0$, if the spectral radius condition (\ref{eq29}) holds, then there exists a nonsingular linear transformation
\begin{equation}\label{eq30}
u_{l}^{\ast}(k)=P(k)u_{l}(k)\triangleq\begin{bmatrix}u_{1,l}^{\ast}(k)\\u_{2,l}^{\ast}(k)\end{bmatrix}
\end{equation}
\noindent where $u_{1,l}^{\ast}(k)\in\mathbb{R}^{p}$ and $u_{2,l}^{\ast}(k)\in\mathbb{R}^{m-p}$, such that
\begin{enumerate}
\item
$u_{2,l}^{\ast}(k)$ is iteration-independent for each time step, i.e.,
\begin{equation}\label{eq31}
\aligned
u_{2,l}^{\ast}(k)&\equiv u_{2,0}^{\ast}(k),\quad\forall l\in\mathbb{Z}_{+},k\in\mathbb{Z}_{N-1}
\endaligned
\end{equation}

\item
$u_{1,l}^{\ast}(k)$ is iteratively updated in the form of
\begin{equation}\label{eq32}
u_{1,l+1}^{\ast}(k)=u_{1,l}^{\ast}(k)+\Gamma^{\ast}(k)e_{l}(k+1),\quad\forall l\in\mathbb{Z}_{+},k\in\mathbb{Z}_{N-1}
\end{equation}

\noindent where the learning gain matrix $\Gamma^{\ast}(k)$ satisfies
\[\Gamma^{\ast}(k)=C(k+1)B(k)\Gamma(k)\]

\item
the system (\ref{eq01}) is equivalently transformed into a system driven only by $u_{1,l}^{\ast}(k)$ over $k\in\mathbb{Z}_{N}$ and $l\in\mathbb{Z}_{+}$, i.e.,
\begin{equation}\label{eq33}
\left\{\aligned
x_{l}(k+1)&=A_{l}(k)x_{l}(k)+B^{\ast}(k)u_{1,l}^{\ast}(k)+w_{l}^{\ast}(k)\\
y_{l}(k)&=C(k)x_{l}(k)+v_{l}(k)
\endaligned\right.
\end{equation}

\noindent where
\[\aligned
B^{\ast}(k)&=B_{1}(k)\widehat{P}_{11}(k)+B_{2}(k)\widehat{P}_{21}(k)\\
w_{l}^{\ast}(k)&=w_{l}(k)\\
&~~~+B(k)\begin{bmatrix}\widehat{P}_{12}(k)P_{21}(k)&\widehat{P}_{12}(k)P_{22}(k)\\
\widehat{P}_{22}(k)P_{21}(k)&\widehat{P}_{22}(k)P_{22}(k)\end{bmatrix}u_{0}(k).
\endaligned\]

\noindent Further, $C(k+1)B^{\ast}(k)=I$ holds for (\ref{eq28}) over $k\in\mathbb{Z}_{N-1}$.
\end{enumerate}
\end{thm}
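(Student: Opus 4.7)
The plan is to follow the same outline as the proof of Theorem \ref{thm01}, replacing the role of $D(k)$ throughout by the coupling matrix $C(k+1)B(k)$ that governs the relative-degree-one case. The cornerstone identity to establish first is
\[
P(k)\Gamma(k)=\begin{bmatrix}\Gamma^{\ast}(k)\\ 0\end{bmatrix},\qquad \Gamma^{\ast}(k)=C(k+1)B(k)\Gamma(k).
\]
The top block reduces to $C(k+1)B_{1}(k)\Gamma_{1}(k)+C(k+1)B_{2}(k)\Gamma_{2}(k)=C(k+1)B(k)\Gamma(k)$ directly from the definitions of $P_{11}(k)$ and $P_{12}(k)$, while the bottom block telescopes after recognizing that $[C(k+1)B(k)\Gamma(k)]^{-1}[C(k+1)B_{1}(k)\Gamma_{1}(k)+C(k+1)B_{2}(k)\Gamma_{2}(k)]=I$, so the two $\Gamma_{2}(k)$ contributions cancel. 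This is the only nontrivial algebraic step, and it is where I expect the main bookkeeping effort.

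With this identity in hand, substituting $u^{\ast}_{l}(k)=P(k)u_{l}(k)$ into the updating law (\ref{eq03}) with $\Xi(k)\equiv 0$ gives $u^{\ast}_{l+1}(k)=u^{\ast}_{l}(k)+P(k)\Gamma(k)e_{l}(k+1)$. Splitting into blocks immediately yields (\ref{eq32}) together with $u^{\ast}_{2,l+1}(k)=u^{\ast}_{2,l}(k)$, from which the iteration-independence statement (\ref{eq31}) follows by induction on $l$.

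To recover the equivalent system (\ref{eq33}), I would substitute $u_{l}(k)=P^{-1}(k)u^{\ast}_{l}(k)$ into $B(k)u_{l}(k)$, split the product column-wise according to the block form of $P^{-1}(k)$ provided by Lemma \ref{lem02}, and identify $B^{\ast}(k)=B_{1}(k)\widehat{P}_{11}(k)+B_{2}(k)\widehat{P}_{21}(k)$. Using (\ref{eq31}) to replace $u^{\ast}_{2,l}(k)$ by $u^{\ast}_{2,0}(k)=[P_{21}(k)~P_{22}(k)]u_{0}(k)$, the residual term folds exactly into the prescribed contribution to $w^{\ast}_{l}(k)$. Because (\ref{eq28}) already has $D_{l}(k)\equiv 0$, the output equation in (\ref{eq33}) needs no change.

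The last claim $C(k+1)B^{\ast}(k)=I$ is then a direct computation: expanding $\widehat{P}_{11}(k)$ and $\widehat{P}_{21}(k)$ from Lemma \ref{lem02} and factoring $[C(k+1)B(k)\Gamma(k)]^{-1}$ to the right gives $C(k+1)B^{\ast}(k)=[C(k+1)B_{1}(k)\Gamma_{1}(k)+C(k+1)B_{2}(k)\Gamma_{2}(k)][C(k+1)B(k)\Gamma(k)]^{-1}=I$. Overall, the argument is a structural mirror of the proof of Theorem \ref{thm01}; the only points requiring care are the block identity for $P(k)\Gamma(k)$ and the one-step shift in the indexing of $B(k)$ versus $C(k+1)$.
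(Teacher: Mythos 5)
Your proposal is correct and is exactly what the paper intends: its own proof of Theorem~\ref{thm03} simply states that one follows the same lines as the proof of Theorem~\ref{thm01}, which is precisely the substitution $D(k)\to C(k+1)B(k)$, $\Xi(k)\to\Gamma(k)$ (with the one-step shift $e_l(k)\to e_l(k+1)$) that you carry out. The block identity $P(k)\Gamma(k)=[\Gamma^{\ast\tp}(k)~0]^{\tp}$ and the final computation $C(k+1)B^{\ast}(k)=I$ both check out.
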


\begin{proof}
We can directly follow the same lines as the proof of Theorem \ref{thm01} to establish the results of Theorem \ref{thm03}, for which the proof details are omitted here for simplicity.
\end{proof}

\begin{rem}\label{rem05}
In Theorem \ref{thm03}, the decoupling control is realized for the input and the output of the square MIMO system (\ref{eq33}). We can actually validate
\begin{equation*}\label{}
\aligned
y_{l}(k+1)
&=C(k+1)B^{\ast}(k)u_{1,l}^{\ast}(k)+C(k+1)A_{l}(k)x_{l}(k)\\
&~~~+C(k+1)w_{l}^{\ast}(k)+v_{l}(k+1)
\endaligned
\end{equation*}

\noindent which, together with $C(k+1)B^{\ast}(k)=I$, immediately leads to
\[
\aligned
y_{l}(k+1)
&=u_{1,l}^{\ast}(k)+C(k+1)A_{l}(k)x_{l}(k)+C(k+1)w_{l}^{\ast}(k)\\
&~~~+v_{l}(k+1).
\endaligned
\]

\noindent This clearly shows that the SET approach can gain the ``one-to-one control'' of the considered ILC systems, in spite of the nonzero relative degrees.
%
\end{rem}

The condition contradiction also occurs for the ILC system (\ref{eq28}) and (\ref{eq03}). Specifically, if we employ \cite[Theorems 1 and 3]{mm:171}, then for the ILC system (\ref{eq28}) and (\ref{eq03}) with $\Xi(k)\equiv0$, the boundedness condition of \cite[Condition ($\mathcal{C}_{1}$)]{mm:171} turns into
\begin{equation}\label{eq34}
\rho \left(I-\Gamma(k)C(k+1)B(k)\right)<1,\quad\forall k\in\mathbb{Z}_{N-1}
\end{equation}

\noindent and the robust ILC convergence condition of \cite[Condition ($\mathcal{C}_{2}$)]{mm:171} becomes exactly (\ref{eq29}). However, (\ref{eq29}) and (\ref{eq34}) generally contradict with each other. In contrast with this, if we consider (\ref{eq32}) and (\ref{eq33}) instead of (\ref{eq28}) and (\ref{eq03}), then due to the fact that
\begin{equation}\label{eq35}
\aligned
I-\Gamma^{\ast}(k)C(k+1)B^{\ast}(k)
&=I-C(k+1)B^{\ast}(k)\Gamma^{\ast}(k)\\
&=I-C(k+1)B(k)\Gamma(k),\quad\forall k\in\mathbb{Z}_{N-1}
\endaligned
\end{equation}

\noindent the condition contradiction for (\ref{eq28}) and (\ref{eq03}) vanishes for (\ref{eq32}) and (\ref{eq33}). By taking advantage of this property, we can further establish the following theorem for robust convergence of the ILC system (\ref{eq28}) and (\ref{eq03}).

\begin{thm}\label{thm04}
Consider the system (\ref{eq28}) under the updating law (\ref{eq03}) with $\Xi(k)\equiv0$, and let Assumption (A1) hold. If the spectral radius condition (\ref{eq29}) is satisfied, then the boundedness of system signals and the robust tracking task of ILC can be simultaneously achieved, namely,
\begin{enumerate}
\item
$x_{l}(k)$, $u_{l}(k)$ and $y_{l}(k)$ are uniformly bounded, i.e.,
\[\aligned
\sup_{l\geq0}\max_{0\leq k\leq N}\|x_{l}(k)\|&\leq\beta_{x},
&\sup_{l\geq0}\max_{0\leq k\leq N-1}\|u_{l}(k)\|&\leq\beta_{u}\\
\sup_{l\geq0}\max_{0\leq k\leq N}\|y_{l}(k)\|&\leq\beta_{y}
\endaligned\]

\noindent for some finite bounds $\beta_{x}\geq0$, $\beta_{u}\geq0$ and $\beta_{y}\geq0$;

\item
$e_{l}(k)$ is uniformly bounded and fulfills (\ref{eq02}) for some finite bound $\varepsilon$ depending continuously on the bounds $\overline{\beta}_{A}$, $\overline{\beta}_{x_{0}}$, $\overline{\beta}_{w}$, $\overline{\beta}_{v}$ and $\overline{\beta}_{r}$ of the nonrepetitive uncertainties (i.e., $\varepsilon\to0$ if $\overline{\beta}_{A}\to0$, $\overline{\beta}_{x_{0}}\to0$, $\overline{\beta}_{w}\to0$, $\overline{\beta}_{v}\to0$ and $\overline{\beta}_{r}\to0$).
\end{enumerate}
\end{thm}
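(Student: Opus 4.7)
The plan is to parallel the two-part argument of Theorem \ref{thm02}, but based on the nonzero-relative-degree SET from Theorem \ref{thm03} and the explicit similarity identity (\ref{eq35}) in place of (\ref{eq20}). After the nonsingular transformation $u_{l}^{\ast}(k)=P(k)u_{l}(k)$, the block $u_{2,l}^{\ast}(k)\equiv u_{2,0}^{\ast}(k)$ is automatically bounded, and the plant becomes the equivalent system (\ref{eq33}) with the clean decoupling $C(k+1)B^{\ast}(k)=I$. This is what makes the argument manageable: $u_{1,l}^{\ast}(k)$ couples into $y_{l}(k+1)$ with unit gain, and both channels of the updating law collapse to (\ref{eq32}).

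For part 1), I would first use $C(k+1)B^{\ast}(k)=I$ to rewrite
\[
y_{l}(k+1)=u_{1,l}^{\ast}(k)+C(k+1)A_{l}(k)x_{l}(k)+C(k+1)w_{l}^{\ast}(k)+v_{l}(k+1),
\]
then substitute into (\ref{eq32}) to obtain the iteration recursion
\[
\aligned
u_{1,l+1}^{\ast}(k)
&=[I-\Gamma^{\ast}(k)]u_{1,l}^{\ast}(k)\\
&\quad+\Gamma^{\ast}(k)\bigl[r_{l}(k+1)-C(k+1)A_{l}(k)x_{l}(k)-C(k+1)w_{l}^{\ast}(k)-v_{l}(k+1)\bigr].
\endaligned
\]
By (\ref{eq35}) combined with (\ref{eq29}), $\rho(I-\Gamma^{\ast}(k))<1$ for each $k\in\mathbb{Z}_{N-1}$, so a submultiplicative norm makes the iteration operator a strict contraction uniformly over the finite set of time indices. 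Together with the state equation of (\ref{eq33}), whose forcing depends only on $u_{1,l}^{\ast}(k)$ and bounded disturbances via Assumption (A1), I would run the same inductive analysis used in \cite[Theorem~1]{mm:171} (invoking its Lemma~2) to obtain uniform bounds on $x_{l}(k)$ and $u_{1,l}^{\ast}(k)$. Combining with the iteration-constant bound on $u_{2,l}^{\ast}(k)$ and the formula $u_{l}(k)=P^{-1}(k)u_{l}^{\ast}(k)$ gives $\beta_{u}$, and the output equation of (\ref{eq28}) delivers $\beta_{y}=\beta_{C}\beta_{x}+\beta_{v}$.

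For part 2), I would take the forward difference of (\ref{eq28}), obtaining
\[
\Delta x_{l}(k+1)=A_{l}(k)\Delta x_{l}(k)+x_{l+1}(k)\Delta A_{l}(k)+B(k)\Delta u_{l}(k)+\Delta w_{l}(k),
\]
and then, using $\Delta u_{l}(k)=\Gamma(k)e_{l}(k+1)$ from (\ref{eq03}) with $\Xi(k)\equiv 0$,
\[
e_{l+1}(k+1)=\bigl[I-C(k+1)B(k)\Gamma(k)\bigr]e_{l}(k+1)+\tau_{l}(k),
\]
with
\[
\tau_{l}(k)=\Delta r_{l}(k+1)-\Delta v_{l}(k+1)-C(k+1)\bigl[A_{l}(k)\Delta x_{l}(k)+x_{l+1}(k)\Delta A_{l}(k)+\Delta w_{l}(k)\bigr].
\]
The iteration operator here is precisely the one controlled by (\ref{eq29}), and $\tau_{l}(k)$ is dominated by the uniform bounds from part 1) multiplied by the $\overline{\beta}$-bounds of Assumption (A1). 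An induction on $k$, initialized by the bounded quantity $e_{l}(0)=r_{l}(0)-C(0)x_{l}(0)-v_{l}(0)$ and proceeding exactly as in \cite[Theorem~3]{mm:171} (via its Lemma~1), then produces both the uniform boundedness of $e_{l}(k)$ and the limsup bound (\ref{eq02}) with the advertised continuous dependence of $\varepsilon$ on $\overline{\beta}_{A}$, $\overline{\beta}_{x_{0}}$, $\overline{\beta}_{w}$, $\overline{\beta}_{v}$, $\overline{\beta}_{r}$.

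The main obstacle is the bookkeeping introduced by the time shift in $\Delta u_{l}(k)=\Gamma(k)e_{l}(k+1)$: the clean contractive error recursion lives at time index $k+1$, so the $k=0$ boundary term and the appearance of $\Delta x_{l}(k)$ inside $\tau_{l}(k)$ must be discharged in the correct causal order to avoid circularity between the state-difference induction and the error-contraction induction. Once the SET decoupling $C(k+1)B^{\ast}(k)=I$ and the similarity (\ref{eq35}) are in hand, however, this reduces to the nested $(k,l)$-induction scheme already used in \cite{mm:171}, and no new convergence machinery is required beyond the spectral-radius hypothesis (\ref{eq29}).
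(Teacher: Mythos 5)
Your proposal is correct and follows essentially the same route as the paper, which itself only says to combine Lemma \ref{lem02} and Theorem \ref{thm03} with the norm consequence of (\ref{eq29}) and then repeat the steps of Theorem \ref{thm02}'s proof; you have simply written out the details the paper omits (the $I-\Gamma^{\ast}(k)=I-C(k+1)B(k)\Gamma(k)$ contraction for the transformed input recursion via $C(k+1)B^{\ast}(k)=I$, the error recursion at the shifted index $k+1$, and the nested $(k,l)$ induction from \cite{mm:171}). All of these check out, including the handling of the $k=0$ boundary terms.
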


\begin{proof}
Note that the spectral radius condition (\ref{eq29}) offers a necessary and sufficient guarantee to determine some matrix norm to ensure
\[
\left\|I-C(k+1)B(k)\Gamma(k)\right\|<1,\quad\forall k\in\Z_{N-1}.
\]

\noindent Then by considering Lemma \ref{lem02} and Theorem \ref{thm03}, we can prove Theorem \ref{thm04} by following the same steps as the proof of Theorem \ref{thm02}, for which the details are omitted here.
\end{proof}

With Theorem \ref{thm04}, we can see that despite nonzero system relative degrees, the boundedness and convergence analyses can be developed for ILC systems through a unified condition. This, together with Theorem \ref{thm03}, enhances the benefit of the SET-based results in implementing the ILC analysis. 
%

\subsection{Discussions on Linear ILC}

Of particular note is the application of Theorems \ref{thm03} and \ref{thm04} to conventional ILC systems without nonrepetitive uncertainties. More specifically, if $\delta_{A}(l,k)$, $\delta_{x_{0}}(l)$, $\delta_{w}(l,k)$ and $\delta_{v}(l,k)$ do not exist in (\ref{eq28}) and (\ref{eq04}), then it collapses into a repetitive MIMO system over $k\in\mathbb{Z}_{N}$ and $l\in\mathbb{Z}_{+}$ as
\begin{equation}\label{eq36}
\left\{\aligned
x_{l}(k+1)&=A(k)x_{l}(k)+B(k)u_{l}(k)+w(k)\\
y_{l}(k)&=C(k)x_{l}(k)+v(k)\\
x_{l}(0)&=x_{0}
\endaligned\right.
\end{equation}

\noindent and if $\delta_{r}(l,k)$ disappears, then instead of (\ref{eq02}), a perfect tracking task is of interest as
\begin{equation}\label{eq37}
\lim_{l\to\infty}\left[r(k)-y_{l}(k)\right]=0,\quad\forall k=1,2,\cdots,N.
\end{equation}

\noindent Correspondingly, the updating law (\ref{eq03}) with $\Xi(k)=0$ becomes
\begin{equation}\label{eq38}
u_{l+1}(k)=u_{l}(k)+\Gamma(k)\left[r(k+1)-y_{l}(k+1)\right],\forall l\in\mathbb{Z}_{+},k\in\mathbb{Z}_{N-1}.
\end{equation}

Based on Theorems \ref{thm03} and \ref{thm04}, the perfect ILC tracking result can be established for (\ref{eq36}) and (\ref{eq38}) in the following corollary.

\begin{cor}\label{cor01}
Consider the system (\ref{eq36}), and let the updating law (\ref{eq38}) be applied. The perfect tracking task (\ref{eq37}) is achieved, together with the uniform boundedness of all system signals being ensured, if and only if the spectral radius condition (\ref{eq29}) is satisfied.
\end{cor}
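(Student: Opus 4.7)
The plan is to split the corollary into its two implications and handle each with the machinery already in place. For sufficiency I specialize Theorem \ref{thm04} to the purely repetitive setting of (\ref{eq36}). The absence of the nonrepetitive perturbations in (\ref{eq36}) means $\overline{\beta}_{A} = \overline{\beta}_{x_{0}} = \overline{\beta}_{w} = \overline{\beta}_{v} = \overline{\beta}_{r} = 0$, so Assumption (A1) is trivially met. Under the hypothesis (\ref{eq29}), Theorem \ref{thm04}(1) yields uniform boundedness of $x_{l}(k)$, $u_{l}(k)$ and $y_{l}(k)$, while Theorem \ref{thm04}(2) states that the steady-state error bound $\varepsilon$ depends continuously on those uncertainty bounds and vanishes as they do. Hence $\limsup_{l\to\infty}\max_{k}\|e_{l}(k)\| = 0$, which is exactly (\ref{eq37}).

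For necessity I assume (\ref{eq37}) and uniform boundedness hold for every admissible initial input $u_{0}(\cdot)$, and prove by induction on $k = 0,1,\ldots,N-1$ that $\rho(I - C(k+1)B(k)\Gamma(k)) < 1$. A direct computation from (\ref{eq36}) and (\ref{eq38}), combined with $\Delta x_{l}(0) = 0$ (since $x_{l}(0)\equiv x_{0}$), yields the two coupled recursions
\begin{align*}
\Delta x_{l}(k+1) &= A(k)\,\Delta x_{l}(k) + B(k)\Gamma(k)\,e_{l}(k+1),\\
e_{l+1}(k+1) &= \left[I - C(k+1)B(k)\Gamma(k)\right]e_{l}(k+1)\\
&\quad - C(k+1)A(k)\,\Delta x_{l}(k).
\end{align*}
For the base case $k=0$, $\Delta x_{l}(0)=0$ collapses the second recursion to the autonomous iteration $e_{l+1}(1) = [I - C(1)B(0)\Gamma(0)]e_{l}(1)$; since varying $u_{0}(0)$ drives $e_{0}(1)$ through the range of $C(1)B(0)$, and convergence must hold for every initialization, one reads off $\rho(I - C(1)B(0)\Gamma(0))<1$---in particular that range must be all of $\mathbb{R}^{p}$, i.e., (A3) holds at $k=0$.

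For the inductive step, assuming the spectral-radius conclusion at $0,\ldots,k-1$, I would choose $u_{0}(0),\ldots,u_{0}(k-1)$ by recursively solving $C(j+1)B(j)u_{0}(j) = r(j+1) - C(j+1)A(j)x_{0}(j) - C(j+1)w(j) - v(j+1)$, so that $e_{0}(j) = 0$ for $j = 1,\ldots,k$. The updating law (\ref{eq38}) then keeps $u_{l}(j)\equiv u_{0}(j)$ for $j\leq k-1$ and hence $\Delta x_{l}(j)\equiv 0$ for $j\leq k$ and all $l$. The recursion at index $k$ reduces to the autonomous $e_{l+1}(k+1) = [I - C(k+1)B(k)\Gamma(k)]e_{l}(k+1)$ with $e_{0}(k+1)$ arbitrary in $\mathbb{R}^{p}$ via $u_{0}(k)$, forcing $\rho(I - C(k+1)B(k)\Gamma(k))<1$. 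The main obstacle is precisely this inductive bookkeeping: verifying that the carefully chosen initial input indeed freezes $\Delta x_{l}(j)$ at zero through iteration so that the recursion at the target index decouples, and handling the boundary case where (A3) fails---there the desired (\ref{eq29}) is impossible for any $\Gamma$ by Lemma \ref{lem02}, consistent with the rank deficiency immediately producing a unit eigenvalue of $I - C(k+1)B(k)\Gamma(k)$.
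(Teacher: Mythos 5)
Your sufficiency argument is correct and is essentially the paper's own route: specialize Theorem \ref{thm04} to the repetitive system (\ref{eq36}), observe that all uncertainty bounds vanish so that $\varepsilon=0$, and conclude (\ref{eq37}) together with uniform boundedness. The two coupled recursions you derive for $\Delta x_{l}(k+1)$ and $e_{l+1}(k+1)$ are also correct, and the inductive device of choosing $u_{0}(0),\dots,u_{0}(k-1)$ to annihilate $e_{0}(1),\dots,e_{0}(k)$---so that the updating law freezes those inputs, $\Delta x_{l}(j)\equiv0$ for $j\le k$, and the error recursion at index $k$ decouples---is sound given the inductive hypothesis (which supplies full row rank of $C(j+1)B(j)$ for $j<k$ via Lemma \ref{lem02}). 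This is considerably more explicit than the paper, which disposes of the corollary in one sentence by citing Theorem \ref{thm03} and LTI stability.

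The gap is in the necessity direction, at the step where you ``read off'' $\rho(I-C(k+1)B(k)\Gamma(k))<1$ and, ``in particular,'' full row rank of $C(k+1)B(k)$. Varying $u_{0}(k)$ only sweeps $e_{0}(k+1)$ over the affine set $c_{k}+\mathrm{range}\bigl(C(k+1)B(k)\bigr)$, where $c_{k}=r(k+1)-C(k+1)A(k)x_{0}(k)-C(k+1)w(k)-v(k+1)$, and convergence of $M^{l}e_{0}(k+1)\to0$ for initial data confined to a \emph{proper} affine subspace does not force $\rho(M)<1$. Concretely, if $q^{\tp}C(k+1)B(k)=0$ for some $q\neq0$, your decoupled recursion gives $q^{\tp}e_{l+1}(k+1)=q^{\tp}e_{l}(k+1)=q^{\tp}c_{k}$ for all $l$; when the data happen to satisfy $q^{\tp}c_{k}=0$ (e.g., $C(k)\equiv0$, $w\equiv v\equiv r\equiv0$), perfect tracking and boundedness hold for \emph{every} initial input while (\ref{eq29}) fails, since $I-C(k+1)B(k)\Gamma(k)$ then has the eigenvalue $1$. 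So quantifying over initial inputs alone---which is already an added reading of the corollary, though the intended one---cannot deliver the conclusion. To close the argument you must also exploit the arbitrariness of the reference (the paper's surrounding discussion insists the result hold for any specified $r$): pick $r(k+1)$ so that $q^{\tp}c_{k}\neq0$ and use the frozen component $q^{\tp}e_{l}(k+1)$ to contradict (\ref{eq37}), which rules out rank deficiency and reduces the remaining case to the genuinely autonomous iteration on all of $\mathbb{R}^{p}$. Your closing remark about the case where (A3) fails only shows that (\ref{eq29}) is then impossible; the contrapositive of ``only if'' requires the opposite implication, namely that convergence then fails, and that is exactly what the frozen-component argument with a suitably chosen reference supplies.
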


\begin{proof}
Based on the SET result in Theorem \ref{thm03} and stability results of linear systems (see, e.g., \cite[Theorem 22.11]{r:96}), this corollary can be developed by following the same way as the proof of Theorem \ref{thm04}.
\end{proof}

To develop the result of Corollary \ref{cor01}, either direct or indirect analysis approach to ILC is employed by resorting to either the tracking error or the input for the ILC convergence analysis in the literature (see, e.g., \cite{mjdy:11b}). However, Corollary \ref{cor01} can involve the double-dynamics analysis processes by taking advantage of the SET result in Theorem \ref{thm03}. Further, for the tracking of any specified reference trajectory $r(k)\in\mathbb{R}^{p}$ of interest over $k\in\mathbb{Z}_{N}$, the ILC systems are needed to possess at least $p$ input channels, and the corresponding input and state trajectories can also be determined.
\begin{itemize}
\item
Given any initial input $u_{0}(k)$ over $k\in\mathbb{Z}_{N-1}$, $\lim_{l\to\infty}u_{l}(k)$ for $k\in\mathbb{Z}_{N-1}$ and $\lim_{l\to\infty}x_{l}(k)$ for $k\in\mathbb{Z}_{N}$ both exist and depend heavily on $u_{0}(k)$. We can follow Theorem \ref{thm03} to see that we can determine a unique $u_{1,\infty}^{\ast}(k)=\lim_{l\to\infty}u_{1,l}^{\ast}(k)$, $\forall k\in\mathbb{Z}_{N-1}$ to produce the specified reference $r(k)$. This, together with (\ref{eq31}), yields that $u_{\infty}(k)=\lim_{l\to\infty}u_{l}(k)$ takes the form of
\[\aligned
u_{\infty}(k)
&=P^{-1}(k)u_{\infty}^{\ast}(k)
=P^{-1}(k)\begin{bmatrix}u_{1,\infty}^{\ast}(k)\\u_{2,0}^{\ast}(k)\end{bmatrix}\\
&=P^{-1}(k)\begin{bmatrix}u_{1,\infty}^{\ast}(k)\\\left[P_{21}(k)~P_{22}(k)\right]u_{0}(k)\end{bmatrix},\quad\forall k\in\mathbb{Z}_{N-1}.
\endaligned
\]

\noindent Thus, if $u_{0}(k)$ is specified, the unique control input learnt via ILC can be accordingly determined. A consequence of this is that the state convergence $x_{\infty}(k)=\lim_{l\to\infty}x_{l}(k)$, $\forall k\in\mathbb{Z}_{N}$ can be decided by (\ref{eq36}).
\end{itemize}

The above discussions indicate that the SET analysis helps to disclose the learning nature of ILC for the output tracking of any specified reference trajectories. This can particularly enrich the fundamental convergence analysis results for linear ILC (see, e.g., \cite{bta:06,acm:07}).

\begin{figure}[!t]
  \centering
  \includegraphics[width=0.8\hsize]{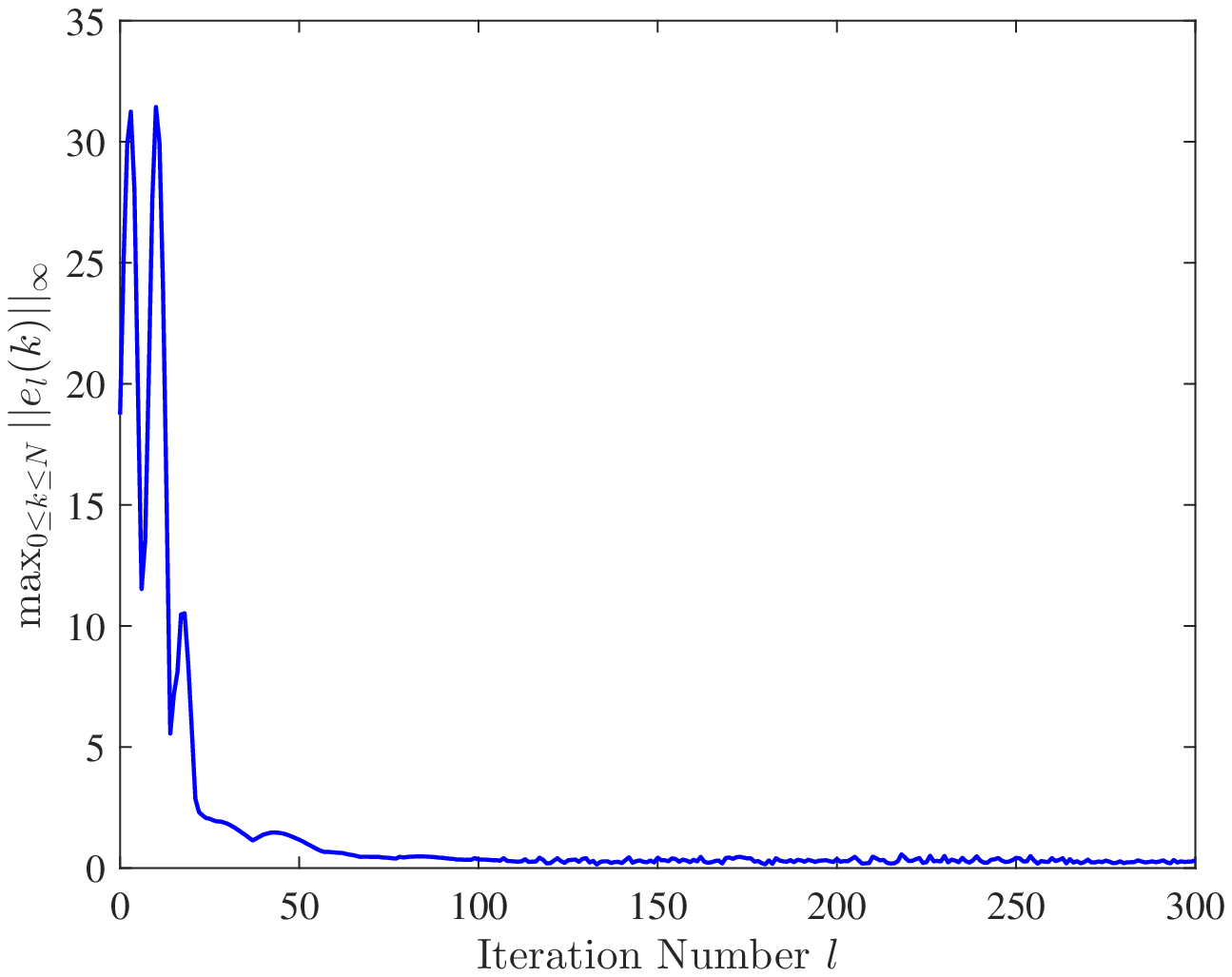}
  \includegraphics[width=0.8\hsize]{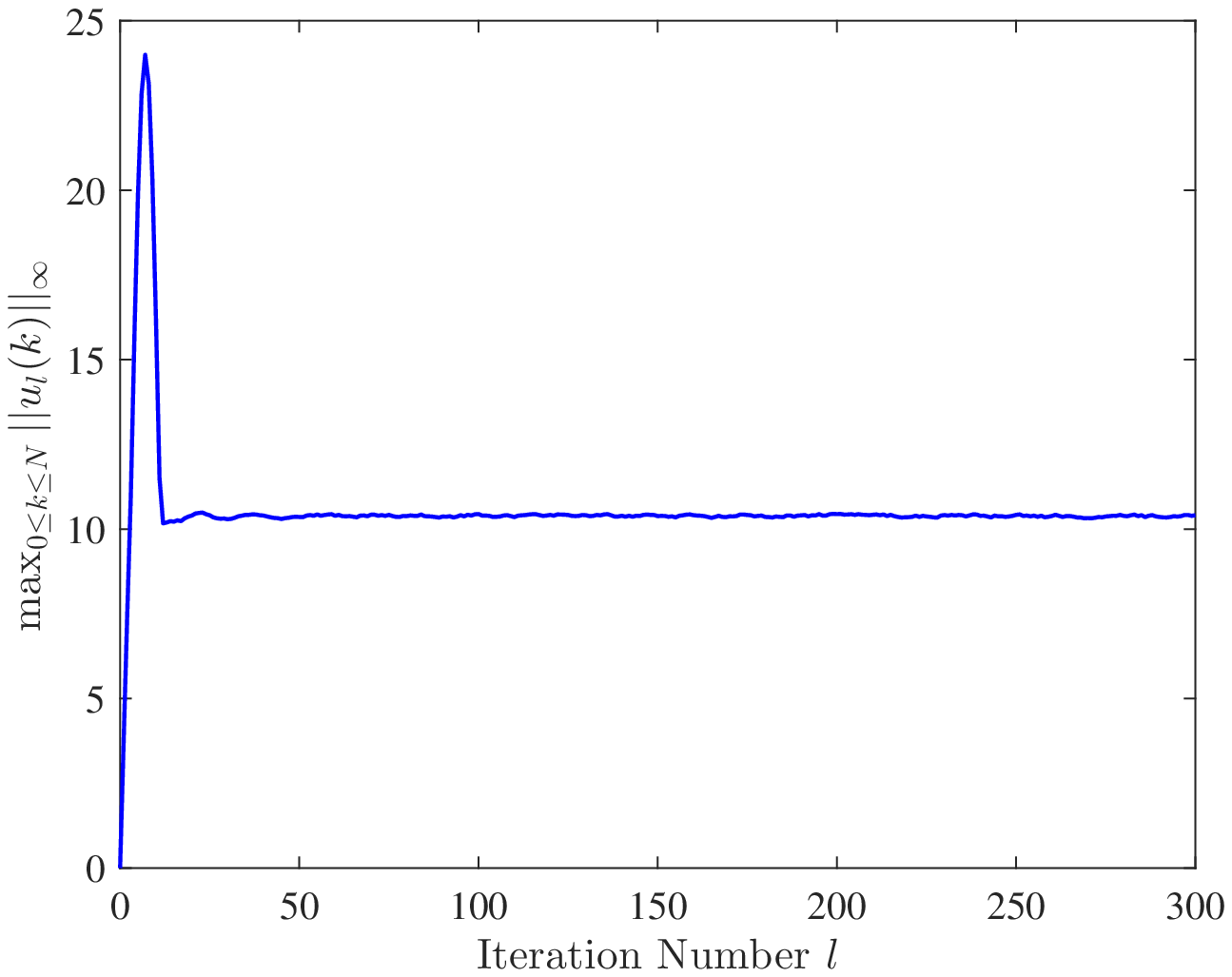}
  \caption{(Example 1). The tracking performance of the ILC system (\ref{eq01}) and (\ref{eq03}) for the first 300 iterations. Upper: ILC process evaluated by $\max_{0 \leq k \leq N} \|e_l(k)\|_\infty$. Lower: input evolution evaluated by $\max_{0 \leq k \leq N} \|u_l(k)\|_\infty$.}
  \label{fig3}
\end{figure}

\section{Illustrative Examples}\label{sec4}

We provide two examples to illustrate our robust ILC results for the systems (\ref{eq01}) and (\ref{eq28}) under the action of the updating law (\ref{eq03}), respectively, where we implement the tracking task (\ref{eq02}) for $N=100$ in the presence of a desired reference trajectory described by
\[r(k)=\begin{bmatrix}
20\left(\frac{\Ds k}{\Ds100}\right)^{2}\left(1-\frac{\Ds k}{\Ds100}\right)\\
3\sin(0.02k\pi)\end{bmatrix},\quad\forall k\in\Z_{100}.
\]

\noindent To perform simulations with the updating law (\ref{eq03}), we without loss of generality adopt the zero initial input, namely, $u_{0}(k)=0$, $\forall k\in\Z_{100}$.

{\it Example 1:} Consider the system (\ref{eq01}), for which we notice (\ref{eq04}) and simulate the nonrepetitive quantities of (\ref{eq01}) in the additive term of the nominal repetitive quantities and the nonrepetitive uncertainties. Let the nominal repetitive quantities be given by
\[
\aligned
A(k)&=\begin{bmatrix}0.16&0.0&0.0&0.0\\0.01e^{0.01k}&-0.1&-0.08&\frac{\Ds0.01}{\Ds k+2}\\0.0&0.08&0.0&0.01\cos(2k)\\-0.01k&0.0&0.0&-0.3\end{bmatrix}\\
B(k)&=\begin{bmatrix}0.5&0.0&0.0\\0.0&0.8&-0.1k\\\cos(0.1k)&0.0&0.5\\0.0&4+5\sin(3k)&3k+4\end{bmatrix}\\
C(k)&=\begin{bmatrix}2&0.0&0.1\cos(0.1(k-1))&0.0\\0.2(k-1)&2&0.0&0.1\end{bmatrix}\\
D(k)&=\begin{bmatrix}1+0.1\cos^2(0.1k) & 0.5 & 0.05\cos(0.1k) \\0.0 & 2+0.5\sin(3k) & 0.4+0.1\cos(k)\end{bmatrix}\\
w(k)&=\big[0.8\cos(0.1k),0.6\sin(0.3k),0.4\cos(0.5k),\\
&~~~~~0.2\sin(0.7k)\big]^{\tp}\\
v(k)&=\big[0.2\sin(0.4k),0.5\cos(0.6k)\big]^{\tp}\\
x_{0}&=[-1,3,-2,4]^{\tp}
\endaligned\]

\noindent For the nonrepetitive uncertainties $\delta_A(l,k)$, $\delta_B(l,k)$, $\delta_C(l,k)$, $\delta_D(l,k)$, $\delta_{x_0}(l,k)$, $\delta_w(l,k)$, $\delta_v(l,k)$, and $\delta_r(l,k)$, we consider every entry of them to vary arbitrarily on $[-0.0002,0.0002]$ with respect to both the iteration number $l$ and the time step $k$, which is simulated through the MATLAB command `rand'.

For the system (\ref{eq01}), we apply the updating law (\ref{eq03}) with $\Gamma(k)\equiv0$ and choose $\Xi(k)$ as
\begin{equation*}
\aligned
\Xi(k)
&=\begin{bmatrix}
0.25+0.1\sin\left(0.1k\right)& -0.1\\
0&0.15+0.1\cos^2\left(3k\right)\\
0&0
\end{bmatrix}.
\endaligned
\end{equation*}

\noindent The simulation tests are shown in Figs. \ref{fig3} and \ref{fig4}. It can be easily observed from Fig. \ref{fig3} that the input and tracking error are both uniformly bounded and the tracking error is finally decreased to vary within a small interval. In Fig. \ref{fig4}, the output learnt via ILC only after $l=100$ iterations can almost track the reference trajectory perfectly for all time steps (including the initial time step), in spite of the ill influences of the nonrepetitive external and internal uncertainties. This is coincident with the result of robust ILC developed in Theorem \ref{thm02}.

\begin{figure}[!t]
  \centering
  \includegraphics[width=0.8\hsize]{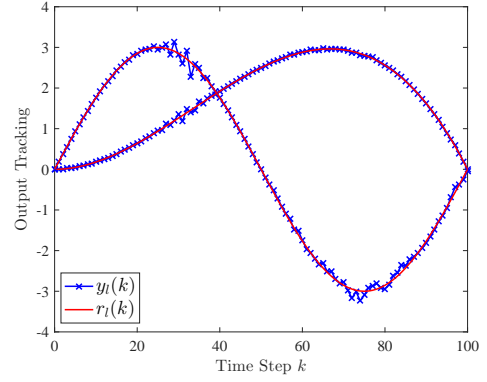}
  \caption{(Example 1). The output tracking of the ILC system (\ref{eq01}) and (\ref{eq03}) with the reference trajectory after $l=100$ iterations.}
  \label{fig4}
\end{figure}

{\it Example 2:} Consider the system (\ref{eq28}). Note that in form, (\ref{eq28}) can be viewed as a special case of (\ref{eq01}). We thus adopt $A_{l}(k)$, $B(k)$, $C(k)$, $w_{l}(k)$ and $v_{l}(k)$ in (\ref{eq28}) as the same as those of (\ref{eq01}). To apply the updating law (\ref{eq03}), we employ $\Xi(k)\equiv0$ and select $\Gamma(k)$ as
\[\Gamma(k)=\begin{bmatrix}0.3+0.1\sin(0.1k)&0\\0&0.2+0.1\cos^{2}(3k)\\0&0\end{bmatrix}.
\]

\noindent In Figs. \ref{fig5} and \ref{fig6}, we plot the simulation results. We can clearly see from Fig. \ref{fig5} that the input and tracking error are uniformly bounded and the tracking error can be decreased to vary within a small bound. In particular, Fig. \ref{fig6} shows that the output learnt via ILC can almost track the reference trajectory perfectly for all time steps (except the initial time step), regardless of the ill effects of the nonrepetitive external and internal uncertainties. This demonstrates the theoretical result of robust ILC proposed in Theorem \ref{thm04}.

\begin{figure}[!t]
\centering
\includegraphics[width=0.8\hsize]{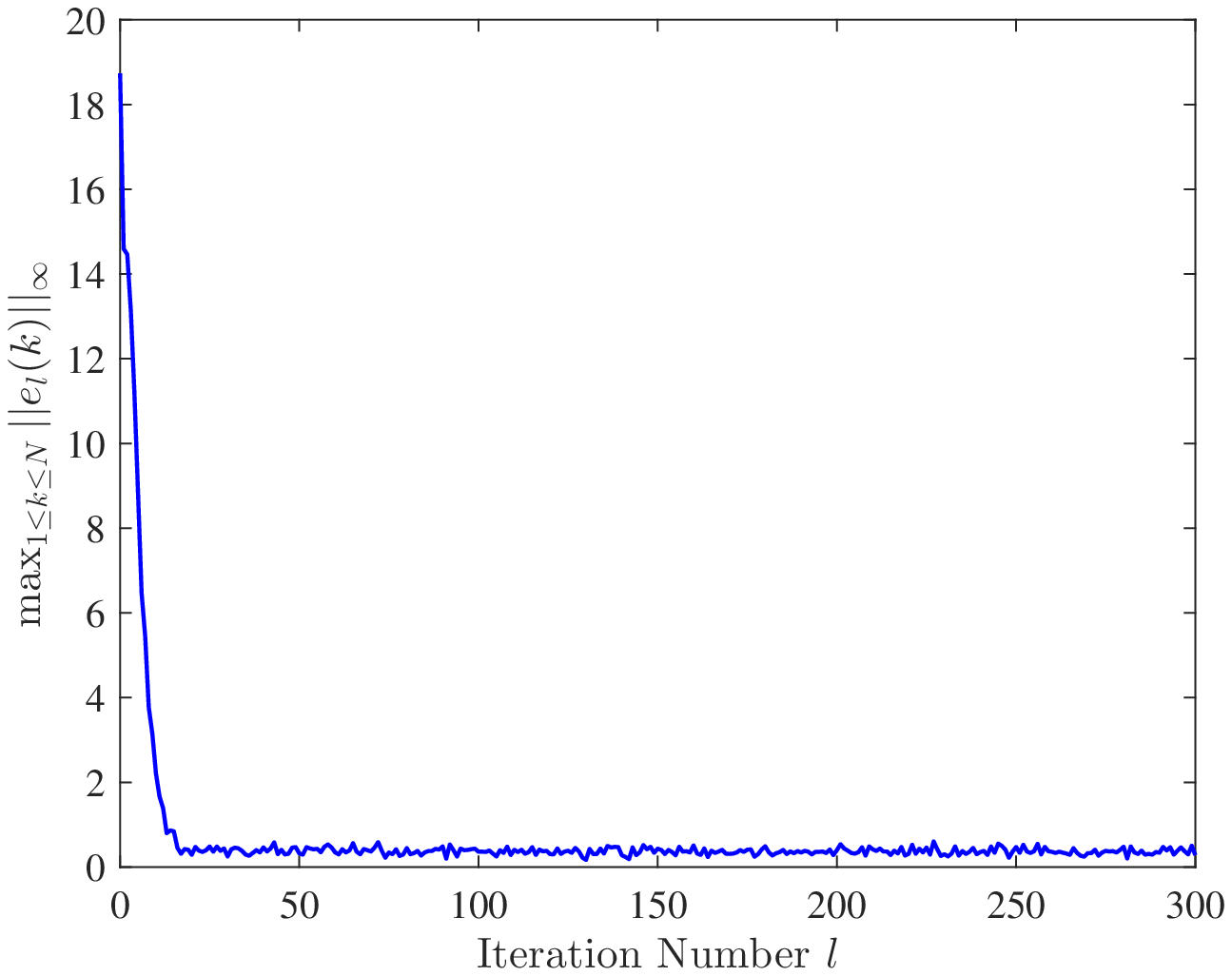}
\includegraphics[width=0.8\hsize]{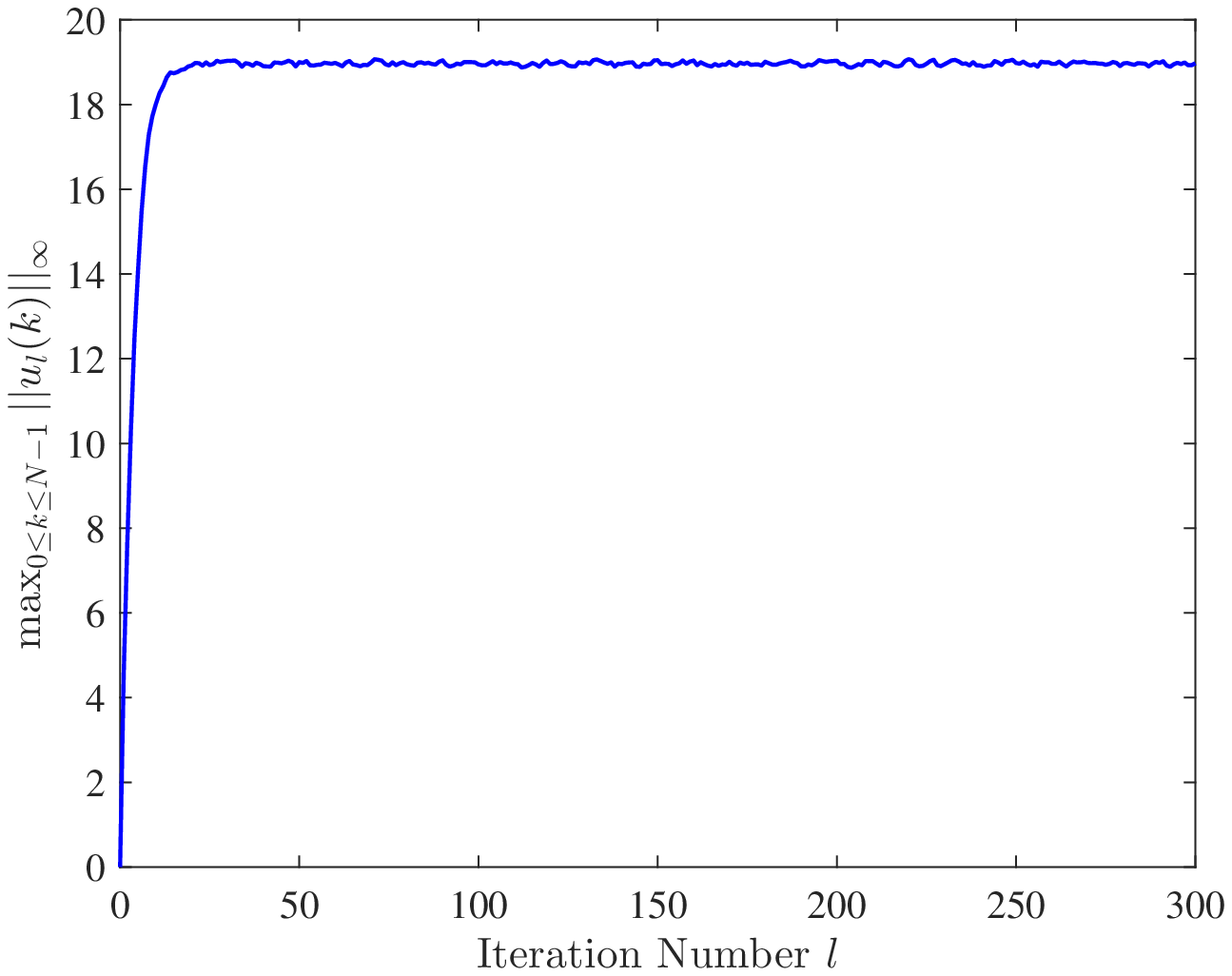}
\caption{(Example 2). The tracking performance of the ILC system (\ref{eq28}) and (\ref{eq03}) for the first $300$ iterations. Upper: ILC process evaluated by $\max_{1\leq k\leq N}\left\|e_{l}(k)\right\|_{\infty}$. Lower: input evolution evaluated by $\max_{0\leq k\leq N-1}\left\|u_{l}(k)\right\|_{\infty}$.}\label{fig5}
\end{figure}
\begin{figure}[!t]
\centering
\includegraphics[width=0.8\hsize]{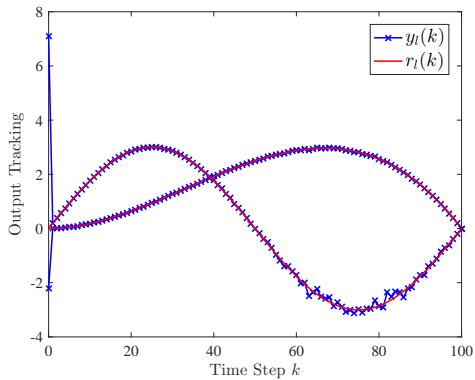}
\caption{(Example 2). The output tracking of the ILC system (\ref{eq28}) and (\ref{eq03}) with the reference trajectory after $l=100$ iterations.}\label{fig6}
\end{figure}

\section{Conclusions}\label{sec5}

In this paper, the fundamental convergence analysis problem of general MIMO ILC systems in the presence of nonrepetitive uncertainties has been studied, for which two classes of SET approaches corresponding to the ILC systems under different nonrepetitive uncertainties have been proposed. It has been validated that after the SET analysis of ILC, the updating patterns of the input can be decomposed into two essentially different classes such that the robust output tracking can be accomplished with a unified convergence condition. Moreover, it has been disclosed that to track a $p$-channel desired reference trajectory, exactly $p$ inputs are needed to be iteratively updated to learn the corresponding desired input, whereas the other input channels always remain unchanged for all iterations. This provides new insights into the convergence analysis of MIMO ILC systems and resolves the condition contradiction between the direct and indirect analysis approaches to ILC. The validity of our derived robust ILC method and results has been verified through two examples.


\section*{Acknowledgements}


The authors would like to express their thanks to Prof. K. L. Moore, Colorado School of Mines, USA, and Miss Y. Wu, Beihang University, P. R. China, for their helpful discussions.


\end{document}